\newtheorem{theorem}{Theorem}[section]
\newtheorem{proposition}[theorem]{Proposition}
\newtheorem{corollary}[theorem]{Corollary}
\newtheorem{lemma}[theorem]{Lemma}
\newtheorem{definition}[theorem]{Definition}
\def\eps {{\epsilon}}
\begin{document}

\title[Kinetic Models with Delocalized Collision Integrals]{Local Conservation Laws and Entropy Inequality\\ for Kinetic Models\\ with Delocalized Collision Integrals}

\author{Fr\'ed\'erique Charles}
\address{Universit\'e Grenoble Alpes, LJK, B\^at. IMAG, 150 place du Torrent, 38400 Saint Martin d'Hères, France}
\email{frederique.charles@univ-grenoble-alpes.fr}

\author{Zhe Chen}
\address{Sorbonne Universit\'e, LJLL, Bo\^\i te courrier 187, 75252 Paris Cedex 05 France}
\email{zhe.chen@sorbonne-universite.fr}

\author{Fran\c cois Golse}
\address{\'Ecole polytechnique \& IP Paris, CMLS, 91128 Palaiseau Cedex, France}
\email{francois.golse@polytechnique.edu}

\begin{abstract}
This article presents a common setting for the collision integrals $\mathrm{St}$ appearing in the kinetic theory of dense gases. It includes the collision integrals of the Enskog equation, of (a variant of) the 
Povzner equation, and of a model for soft sphere collisions proposed by Cercignani [Comm. Pure Appl. Math. \textbf{36} (1983), 479--494]. All these collision integrals are ``delocalized'', in the sense that 
they involve products of the distribution functions of gas molecules evaluated at positions whose distance is of the order of the molecular radius. Our first main result is to express these collision integrals as
the divergence in $v$ of some mass current, where $v$ is the velocity variable, while $v_i\mathrm{St}$ and $|v|^2\mathrm{St}$ are expressed as the phase space divergence (i.e divergence in both position 
and velocity) of appropriate momentum and energy currents. This extends to the case of dense gases an earlier result by Villani [Math. Modelling Numer. Anal. M2AN \textbf{33} (1999), 209--227] in the
case of the classical Boltzmann equation (where the collision integral is involves products of the distribution function of gas molecules evaluated at different velocities, but at the same position. Applications
of this conservative formulation of delocalized collision integrals include the possibility of obtaining the local conservation laws of momentum and energy starting from this kinetic theory of denses gases.
Similarly a local variant of the Boltzmann H Theorem, involving some kind of free energy instead of Boltzmann's H function, can be obtained in the form of an expression for the entropy production in terms 
of the phase space divergence of some phase space current, and of a nonpositive term.
\end{abstract}

\keywords{Kinetic theory of gases; Dense gases; Boltzmann-Enskog equation; Povzner equation; Local conservation laws; Boltzmann's H Theorem}

\subjclass{35Q20, 82C40, 76P05}

\maketitle


\section*{Introduction}


We are concerned in this paper with the basic structure of some kinetic equations used to model non perfect monatomic gases. The classical kinetic theory of gases of Maxwell (1866) and Boltzmann (1872) 
is limited to perfect monatomic gases. It has been later generalized, in particular by Enskog (1922), to treat the case of dense gases (see chapter 16 of \cite{CC}, for instance). Further generalizations of 
Enskog's theory --- most notably Povzner's kinetic equation (1962), and Cercignani's model for the kinetic theory of soft spheres (1980) --- are presented in section \ref{sec:intro}.

The classical kinetic theory of gases is intimately related to the Euler equations of (inviscid) gas dynamics through the local conservation laws of mass, momentum and energy, satisfied by classical solutions 
of the Boltzmann equation (see sections 3.1 and 3.3 in \cite{CIP}, and the remarks in section \ref{sec:LocConsLaws}). At the time of this writing, it would seem that these \textit{local} conservation laws of mass, 
momentum and energy are not known to be verified by classical solutions of the Enskog equation or its variants presented in section \ref{sec:intro}.

For this reason, we propose in this work a systematic method for obtaining local conservation laws in the theory of kinetic models of Enskog type. Our arguments are based on writing the collision integrals
of Enskog type as the divergence in the position and velocity variables of mass, momentum and energy currents, following a little-known idea of Landau on the Boltzmann collision integral (see for instance
\S 41 in \cite{LL10} and \cite{VillaniLandauCurr}).

In the case of the Boltzmann equation, the Friedrichs-Lax entropy condition used in the theory of the Euler system of (inviscid) gas dynamics (see sections 3.2, 4.5 and chapter V of \cite{Dafermos}) can be 
deduced from the local form of the Boltzmann H theorem \cite{BardosFG}. For this reason, we also discuss the possibility of obtaining \textit{local} entropy inequalities for classical solutions of Enskog-type 
kinetic equations.

The outline of this paper is as follows: various examples of kinetic models with delocalized collision integrals analogous to Enskog's are presented in section \ref{sec:intro}. Local conservation laws and
its applications to fluid dynamics are briefly recalled in section \ref{sec:LocConsLaws}. Our main results are presented in section \ref{sec:MainR}. In particular, a general formulation of delocalized collision
integrals is presented in section \ref{ssec:GenDelocCollInt}, and the corresponding global conservation laws of mass, momentum and energy in \ref{ssec:GlobConsDelocCollInt} are briefly recalled for
these generalizations of Enskog's equation. The core of this paper is Theorem \ref{T-Currents}, in section \ref{ssec:LandauCurrents}, providing an analogue of Landau's representation of Boltzmann's
collision integral as the divergence in $v$ of a mass current, in the more complex case of delocalized collision integral. Our second main result is the local entropy inequality for a subclass of delocalized
collision integrals presented in Theorem \ref{T-HThmLoc} of section \ref{ssec:Entropy}. Applications of our results to the continuum mechanics of dense gases are discussed in section \ref{sec:Appli}.


\section{Examples of Delocalized Collision Integrals\\ in the Kinetic Theory of Gases}\label{sec:intro}


All collisional kinetic models take the form
\begin{equation}\label{GenKinMod}
(\partial_t+v\cdot\nabla_x)f(t,x,v)=\mathrm{St}[f](t,x,v)\,,
\end{equation}
where $f\equiv f(t,x,v)\ge 0$ is the velocity distribution function, i.e. the number density of gas molecules at position $x\in\mathbf R^3$ and with velocity $v\in\mathbf R^3$ at time $t$, while $\mathrm{St}$ 
is the collision operator, usually a nonlinear integral operator acting on the velocity, and sometimes the position variables in $f$.

In the classical kinetic theory of gases due to Maxwell (1866) and Boltzmann (1872), the collision integral computes the variation in the number of gas molecules with velocity $v\in\mathbf R^3$ due to binary,
elastic collisions with gas molecules at the same position with a velocity different from $v$. Specifically, if the gas molecules are hard spheres of radius $r>0$, the Boltzmann collision integral is\footnote{See
in \cite{Sone2} the paragraph following equations (1.7) in chapter 1, especially footnote 6 on p. 4 in that chapter.}
\begin{equation}\label{BoltzCollInt}
\mathcal B[f,f](v):=\iint_{\mathbf R^3\times\mathbf S^2}(f(v')f(w')-f(v)f(w))(2r)^2(v-w|n)_+dwdn\,,
\end{equation}
where $f\ge 0$ is the velocity distribution function, while $(u|v)$ is the inner product of $u,v\in\mathbf R^3$, and $z_+:=\max(z,0)$ for all $z\in\mathbf R$ (see section 2.5 in \cite{CIP}). In the integrand above, 
\begin{equation}\label{CollTransfo}
v'\equiv v'(v,w,n):=v-(v-w|n)n\,,\quad w'\equiv w'(v,w,n):=w+(v-w|n)n\,.
\end{equation}
Notice that these transformations are invariant under the substitution $n\mapsto -n$.

One easily checks that $v'(v,w,n),w'(v,w,n)$ runs through the set of all the solutions of the system of equations
\begin{equation}\label{MicroCons}
v'+w'=v+w\,,\quad |v'|^2+|w'|^2=|v|^2+|w|^2
\end{equation}
as $n$ runs through the unit sphere $\mathbf S^2$ of $\mathbf R^3$. The first equality above is the conservation of momentum for a pair of point particles with mass $m>0$, while the second equality is the 
conservation of kinetic energy for the same particles in any elastic collision involving such particles located at the same position in $\mathbf R^3$, with velocities $v,w$ immediately after the collision, and
$v',w'$ immediately before the collision. Without entering the details of the derivation of the Boltzmann collision integral, suffices it to say that, in that integral, one should think of the term $-f(v)f(w)$
as measuring the loss of gas molecules with velocity $v$ due to an impending collision with another gas molecule at the same position with velocity $w$. Similarly, one should think of the term $f(v')f(w')$
as measuring the creation of gas molecules with velocity $v$ resulting from the collision of a gas molecule with velocity $v'$ with a gas molecule with velocity $w'$ immediately before the collision. Thus
all the pair of velocities involved in the Boltzmann collision integrand are always velocities of gas molecules \textit{about to collide} (and never of gas molecules having just collided). This ``justifies'' the 
assumption of Maxwell and Boltzmann that such particle velocities are statistically independent, so that their joint distribution is $f(v')f(w')$ in the latter case, and $f(v)f(w)$ in the former.

Therefore, in the Boltzmann equation for the hard sphere gas, the molecular radius appears only as a scaling factor of the collision differential cross-section $(2r)^2(\cos(\widehat{v-w,n}))_+$, and never in
the distribution function itself in the collision integral. This has been justified (for instance by Lanford \cite{Lanford75}, see also chapters 2 and 4 in \cite{CIP} for a streamlined presentation of Lanford's 
arguments) in the regime known as the ``Boltzmann-Grad limit'' of the classical mechanics of $N$-particle system, where $N\gg 1$ while $r\ll 1$ satisfy $Nr^2\sim 1$. In particular $Nr^3=O(r)\to 0$ in that 
limit, so that there is no excluded volume, and the resulting equations of state for the pressure and the internal energy correspond to a perfect gas.

Situations where the molecular radius is larger may fail to satisfy the Boltzmann-Grad assumption. Alternatively, one might also seek corrections in $r$ to the perfect gas equation of state for the pressure 
and the internal energy. Whatever the motivation, one might be led to consider collision integrals involving a \textit{delocalized} collision integrand, i.e. a collision integrand where gas molecules are not 
considered as point particles located at the same point, but are hard spheres of radius $r>0$.

\begin{figure}
\includegraphics[width=6cm]{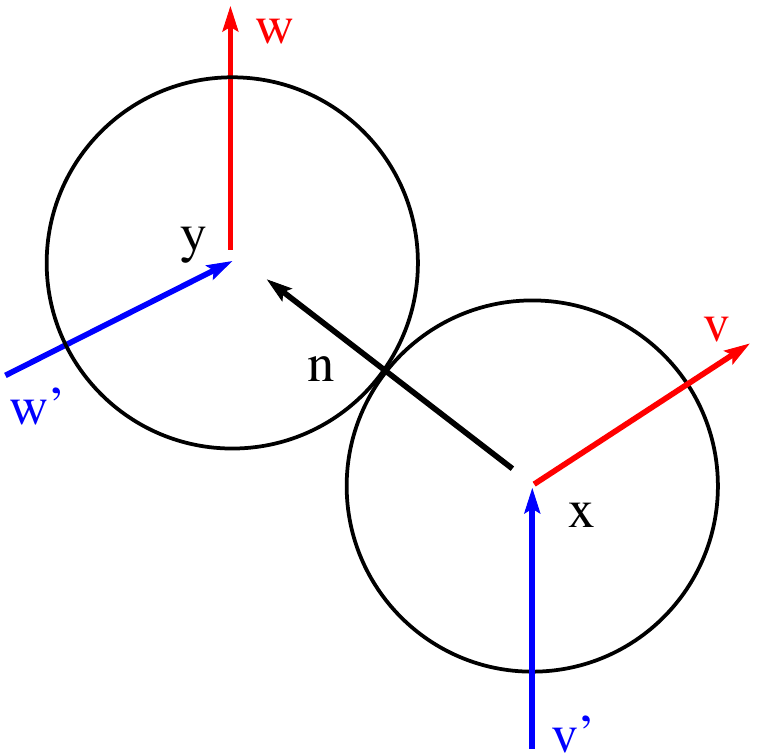}
\caption{Two hard spheres centered at $x$ and $y=x+rn$, with pre-collision velocities $v'$ and $w'$, and post-collision velocities $v$ and $w$. Observe that the post-collision velocities $v$ and $w$ satisfy
the condition $(w-v|n)=(w-v|\tfrac{y-x}{|y-x|})>0$, while the pre-collision velocities satisfy $(w'-v'|n)=(w'-v'|\tfrac{y-x}{|y-x|})<0$.}
\end{figure}

Perhaps the best known kinetic model with such delocalized collision integral is the Enskog-Boltzmann equation (see \cite{ArkerydEnskSIMA,ToscaBello}). In the hard sphere case, the Enskog-Boltzmann 
collision integral takes the form
\begin{equation}\label{EnskBCollInt}
\begin{aligned}
\mathcal E[f,f](x,v):=\iint_{\mathbf R^3\times\mathbf S^2}(&f(x,v'(v,w,n))f(x-2rn,w'(v,w,n))
\\
&-f(x,v)f(x+2rn,w))4r^2(v-w|n)_+dwdn\;.
\end{aligned}
\end{equation}
This collision integral can be recast as
\[
\begin{aligned}
\mathcal E[f,f](x,v):=\iint_{(\mathbf R^3)^2}(&f(x,v'(v,w,n))f(x-z,w'(v,w,n))
\\
&-f(x,v)f(x+z,w))\big(v-w|\tfrac{z}{|z|}\big)_+\delta_{2r}(|z|)dwdz\;,
\end{aligned}
\]
where $\delta_{2r}$ is the Dirac ``function'' at $2r>0$ on $\mathbf R$.

A first obvious generalization is based on the last formula for the Enskog-Boltzmann collision integral. Pick $\zeta\in C(\mathbf R)$ satisfying
\[
\zeta\ge 0\,,\qquad\mathrm{supp}(\zeta)=[-1,1]\,,\qquad\int_\mathbf R\zeta(s)ds=1\;,
\]
and set 
\[
\zeta_\eps(s):=\tfrac1\eps\zeta\left(\tfrac{s}\eps\right)\,,\qquad\theta_{\eps,r}(|z|)=\zeta_\eps(|z|-2r)\;,
\]
so that $\theta_{\eps,r}\to \delta_{2r}$ in $\mathcal D'(\mathbf R)$ as $\eps\to 0^+$. If one replaces the Dirac measure $\delta_{2r}(|z|)$ with its regularized variant $\theta_{\eps,r}(|z|)$,
one arrives at the expression
\begin{equation}\label{SoftSphCollInt}
\begin{aligned}
\mathcal C[f,f](x,v):=\iint_{(\mathbf R^3)^2}\Big(&f(x,v'(v,w,\tfrac{z}{|z|}))f(x-z,w'(v,w,\tfrac{z}{|z|}))
\\
&-f(x,v)f(x+z,w)\Big)\theta_{\eps,r}(|z|)\left(v-w|\tfrac{z}{|z|}\right)_+dwdz\;.
\end{aligned}
\end{equation}
One can think of this as a collision integral for ``soft'' spheres --- soft meaning that the radius of the colliding balls is slightly distributed around $r>0$, instead of being exactly equal to $r$; otherwise 
the collision transformation of $(v,v_*)\mapsto(v',v'_*)$ is the same as above. In other words, these collisions are elastic. (It is unlikely that collisions between pairs of soft spherical particles really
behave in this manner, which is why the word soft is put between quotes.) See \cite{CerciCPAM} for a presentation of this model, which is derived from a large particle system in the style of Lanford's
argument for the Boltzmann equation. In particular, Cercignani proposed a specific formula for $\theta_{\eps,r}$: see equations (2.1)-(2.5) on p. 481 in \cite{CerciCPAM}.

The Cercignani soft-sphere collision integral can be recast as
\[
\begin{aligned}
\mathcal C[f,f](x,v)\!:=&\iint_{(\mathbf R^3)^2}\!\!\Big(f(x,v'(v,w,\tfrac{x-y}{|x-y|}))f(y,w'(v,w,\tfrac{x-y}{|x-y|}))\mathbf 1_{(v-w|x-y)>0}
\\
&-\!f(x,v)f(y,w)\mathbf 1_{(v-w|x-y)<0}\Big)\theta_{\eps,r}(|x-y|)\left|\left(v\!-\!w|\tfrac{x-y}{|x-y|}\right)\right|dwdy\,.
\end{aligned}
\]
A last generalization consists in forgetting the specifics of the distribution $\theta_{\eps,r}$. This suggests considering instead the Povzner type collision integral
\begin{equation}\label{PovznerCollInt}
\begin{aligned}
\mathcal P[f,f](x,v)\!:=\!\iint_{\mathbf R^3\times\mathbf R^3}\Big(f(x,v'(v,w,\tfrac{x-y}{|x-y|}))f(y,w'(v,w,\tfrac{x-y}{|x-y|}))\mathbf 1_{(v-w|x-y)>0}
\\
-f(x,v)f(y,w)\mathbf 1_{(v-w|x-y)<0}\Big)b(|x-y|)\left|\left(v-w|\tfrac{x-y}{|x-y|}\right)\right|dwdy
\end{aligned}
\end{equation}
where $b\in C_c([0,+\infty))$ is a nonnegative function. See \cite{Povzner}, where this model\footnote{Povzner's original collision integral in \cite{Povzner} is
\[
\begin{aligned}
\tilde{\mathcal P}[f,f](x,v):=\iint_{(\mathbf R^3)^2}\Big(&f(x,v'(v,w,\tfrac{x-y}{|x-y|}))f(y,w'(v,w,\tfrac{x-y}{|x-y|}))
\\
&-f(x,v)f(y,w)\Big)b(|x-y|)\left|\left(v-w|\tfrac{x-y}{|x-y|}\right)\right|dwdy\,.
\end{aligned}
\]
Since the transformation $(v,w)\mapsto(v'(v,w,n),w'(v,w,n))$ is invariant under the central symmetry $n\mapsto-n$, 
\[
\mathcal B[f,f](v)=\iint_{\mathbf R^3\times\mathbf S^2}(f(v')f(w')-f(v)f(w))2r^2|(v-w|n)|dwdn\,.
\]
Povzner's original definition of his delocalized collision integral is based on this last form of the Boltzmann integral instead of \eqref{BoltzCollInt}. Since $(v'-w'|x-y)=-(v-w|x-y)<0$ for the gain 
part $f(x,v')f(y,w')$ and $(v-w|x-y)<0$ for the loss part $f(x,v)f(y,w)$ of the collision integral $\mathcal P$, only particles about to collide are uncorrelated in \eqref{PovznerCollInt}, as in the 
case of \eqref{BoltzCollInt}. That this important feature of Boltzmann's theory is absent from \cite{Povzner} may be an oversight. For this reason, we prefer considering $\mathcal P$ instead 
of $\tilde{\mathcal P}$.} is introduced for the first time.

Along with the Enskog-Boltzmann collision integral above, one also finds in the literature the genuine Enskog equation (see chapter 16 of \cite{CC} or \cite{BelloLacho1}) and a modified, 
or revised Enskog collision integral (see \cite{Resibois}, or \cite{vanBeijeren} and \cite{vBErnst1,vBErnst2}):
\begin{equation}\label{EnskCollInt}
\begin{aligned}
\tilde{\mathcal E}[f,f](x,v)\!\!:=\!\!\iint_{\mathbf R^3\times\mathbf S^2}\!\!\big(Y(\rho_f(x),\rho_f(x\!-\!2rn))f(x,v'(v,w,n))f(x\!-\!2rn,w'(v,w,n)))
\\
-\!Y(\rho_f(x),\rho_f(x\!+\!2rn))f(x,v)f(x\!+\!2rn,w)\big)4r^2(v\!-\!w|n)_+dwdn&,
\end{aligned}
\end{equation}
where the factor $Y:\,[0,+\infty)^2\to[0+\infty)$ is used to enhance the collision frequency, and where
\begin{equation}\label{Def-rho}
\rho_f(x):=\int_{\mathbf R^3}f(x,v)dv\,.
\end{equation}


\section{Local Conservation Laws for the Boltzmann Collision Integral}\label{sec:LocConsLaws}


Returning to the Boltzmann collision integral \eqref{BoltzCollInt}, it is well known that, for all $f\in L^1(\mathbf R^3,(1+|v|)^3dv)$, 
\begin{equation}\label{ConsLawBoltzCollInt}
\int_{\mathbf R^3}\mathcal B[f,f](v)\left(\begin{matrix}1\\ v_1\\ v_2\\ v_3\\ |v|^2\end{matrix}\right)dv=\left(\begin{matrix}0\\ 0\\ 0\\ 0\\ 0\end{matrix}\right)\,,
\end{equation}
see for instance sections 3.1 and 3.3 in \cite{CIP}. These five identities correspond to the local conservation of mass (or particle number), of the three components of momentum, and of energy in the collision 
process described by the Boltzmann collision integral.

As a consequence of these identities, if $f$ is a classical solution of the Boltzmann equation 
\[
(\partial_t+v\cdot\nabla_x)f(t,x,v)=\mathcal B[f,f](t,x,v)(=\mathcal B[f(t,x,\cdot),f(t,x,\cdot)](v))
\]
such that $f(t,x,\cdot),\partial_tf(t,x,\cdot)$ and $\nabla_xf(t,x,\cdot)\in L^1(\mathbf R^3,(1+|v|)^3dv)$, then
\begin{equation}\label{ConsLawBoltz}
\partial_t\int_{\mathbf R^3}f(t,x,v)\left(\begin{matrix}1\\ v_1\\ v_2\\ v_3\\ \frac12|v|^2\end{matrix}\right)dv+\nabla_x\cdot\int_{\mathbf R^3}vf(t,x,v)\left(\begin{matrix}1\\ v_1\\ v_2\\ v_3\\ \frac12|v|^2\end{matrix}\right)dv=0\,.
\end{equation}
These five differential identities are of considerable importance in connecting the kinetic description of gases with continuum mechanics. Indeed, if one calls
\begin{equation}\label{DefRhoU}
\rho(t,x):=\int_{\mathbf R^3}f(t,x,v)dv\ge 0\,,\qquad u(t,x)=\frac{\mathbf 1_{\rho(t,x)>0}}{\rho(t,x)}\int_{\mathbf R^3}vf(t,x,v)dv\in\mathbf R^3\,,
\end{equation}
the first identity above takes the form of the equation of continuity of fluid mechanics
\begin{equation}\label{ContEq}
\partial_t\rho(t,x)+\nabla_x\cdot(\rho(t,x)u(t,x))=0\,,
\end{equation}
while the second identity becomes
\begin{equation}\label{EulEqBoltz}
\partial_t(\rho(t,x)u(t,x))+\nabla_x\cdot(\rho(t,x)u(t,x)^{\otimes 2})+\nabla_x\cdot P(t,x)=0\,,
\end{equation}
where 
\begin{equation}\label{DefP}
P(t,x):=\int_{\mathbf R^3}(v-u(t,x))^{\otimes 2}f(t,x,v)dv\,.
\end{equation}
Finally, the third identity takes the form
\begin{equation}\label{EnergEqEulBoltz}
\begin{aligned}
\partial_t(\tfrac12\rho(t,x)|u(t,x)|^2+\tfrac12\mathrm{tr}P(t,x))+\nabla_x\cdot(u(t,x)(\tfrac12\rho(t,x)|u(t,x)|^2+\tfrac12\mathrm{tr}P(t,x)))
\\
+\nabla_x\cdot(P(t,x)\cdot u(t,x)+Q(t,x))=0&\,,
\end{aligned}
\end{equation}
where $Q$ is the time-dependent vector field defined by
\begin{equation}\label{DefQ}
Q(t,x):=\tfrac12\int_{\mathbf R^3}(v-u(t,x))|v-u(t,x)|^2f(t,x,v)dv\,.
\end{equation}
If $f(t,x,v)$ is an even function of $v-u(t,x)$ one easily checks (by a parity argument) that 
\[
Q(t,x)=0\,,
\]
These identities are obviously reminiscent of the Euler system of equations for compressible fluids (see (3.3.17) in chapter III of \cite{Dafermos}), and the tensor field $P(t,x)$ plays the role of 
the pressure. Specifically if $f$ is a radial function of $v-u(t,x)$  --- for instance if $f$ is a local Maxwellian equilibrium in $v$ centered at $u(t,x)$ --- one finds that
\[
P(t,x)=p(t,x)I\,,\quad\text{ and hence }\quad\mathrm{tr}P(t,x)=3p(t,x)I\,.
\]
Hence
\begin{equation}\label{EulEq}
\partial_t(\rho(t,x)u(t,x))+\nabla_x\cdot(\rho(t,x)u(t,x)^{\otimes 2})+\nabla_xp(t,x)=0\,,
\end{equation}
while
\begin{equation}\label{EnergEq}
\partial_t(\tfrac12\rho(t,x)|u(t,x)|^2+\tfrac32p(t,x))+\nabla_x\cdot(u(t,x)(\tfrac12\rho(t,x)|u(t,x)|^2+\tfrac52p(t,x)))=0\,.
\end{equation}
Setting $\theta(t,x)=p(t,x)/\rho(t,x)$, one finds that these are the Euler equation and local energy conservation law for a perfect gas with adiabatic index 
\[
\gamma=(5/2)/(3/2)=5/3\,.
\]
It is well known that the adiabatic index of a perfect gas is $\gamma=1+2/n$, where $n$ is the number of degrees of freedom of the gas molecule (see formulas (44.1-2) in \cite{LL5}). For a 
monatomic gas, each molecule has $n=3$ degrees of freedom (corresponding to translations in the directions of the coordinate axis), so that $\gamma=5/3$ is the adiabatic index of a perfect
monatomic gas. Thus the collision integral \eqref{BoltzCollInt} can describe only perfect monatomic gases.
 
\smallskip
All these considerations suggest studying the following question: compute
\[
\int_{\mathbf R^3}\mathrm{St}[f](t,x,v)\left(\begin{matrix}1\\ v_1\\ v_2\\ v_3\\ |v|^2\end{matrix}\right)dv\,,
\]
where $\mathrm{St}[f]$ is any one of the delocalized collision integrals presented above, i.e. $\mathcal E[f,f]$ (Enskog-Boltzmann) or $\tilde{\mathcal E}[f,f]$ (revised Enskog), or $\mathcal C[f,f]$ 
(Cercignani's soft-sphere collision integral), or $\mathcal P[f,f]$ (Povzner-type collision integral). 

In general, delocalization may prevent some of these quantities to be identically $0$. However, it is certainly interesting to study these expressions in view of applications to fluid mechanics.


\section{Main Results: Delocalized Collision Integrals\\ in Conservative Form and Applications}\label{sec:MainR}


In order to approach the question of local conservation laws for delocalized collision integrals, we shall make a small detour. While it may seem unnecessary, it sheds some light on the structure
of these collision integrals.

Our approach finds its origin in a seemingly little-known property of the Boltzmann collision integral. The fact that the Boltzmann collision integral satisfies the local conservation of mass suggests 
seeking a mass current $J[f,f](v)$ such that
\[
\mathcal B[f,f](v)=-\nabla_v\cdot J[f,f](v)
\]
for all $f\in\mathcal S(\mathbf R^3)$ (the Schwartz space of smooth functions with rapidly decaying derivatives of all orders). This is the starting point of the derivation of the Landau collision integral 
from the Boltzmann collision integral in the grazing collisions regime (see \S41 in \cite{LL10}). A more systematic (and rigorous) discussion of this property can be found in \cite{VillaniLandauCurr}.

\subsection{Delocalized Collision Integrals: a General Setting}\label{ssec:GenDelocCollInt}


In this section, we propose a common setting for the delocalized collision integrals $\mathcal E,\mathcal C,\mathcal P,\tilde{\mathcal E}$ presented in section \ref{sec:intro}.

The general form of delocalized collision integrals considered in this paper is
\begin{equation}\label{GenCollInt}
\begin{aligned}
\mathrm{St}[f,g](x,v):=\int_{(\mathbf R^3)^2}(&f(x,v'(v,w,n_{x,y}))g(y,w'(v,w,n_{x,y}))\mathbf 1_{(v-w|n_{x,y})>0}
\\
&-f(x,v)g(y,w)\mathbf 1_{(v-w|n_{x,y})<0})B[\rho_f,\rho_g](x,y,v-w)dydw\,,
\end{aligned}
\end{equation}
where
\[
\rho_f(x):=\int_{\mathbf R^3}f(x,v)dv\,,\qquad\rho_g(y):=\int_{\mathbf R^3}g(y,w)dw
\]
are the macroscopic densities of the distribution functions $f$ and $g$, while
\[
n_{x,y}:=\frac{x-y}{|x-y|}\,,\qquad x\not=y\,.
\]

The collision kernel $B[\rho_f,\rho_g](x,y,v-w)\ge 0$ is in general a functional of the macroscopic densities $\rho_f$ and $\rho_g$, and a function --- or more generally a distribution --- of the positions 
$x,y$, and of the relative velocity $v-w$. In addition, it satisfies
\[
\left\{
\begin{aligned}
{}&(i) &&B[\varpi,\rho](y,x,w-v)=B[\rho,\varpi](x,y,v-w)\,,
\\
&(ii) &&B[\rho,\varpi](x,y,v-w)=B[\rho,\varpi](x,y,v'(v,w,n_{x,y})-w'(v,w,n_{x,y}))\,,
\\
&(iii) &&B[\rho,\varpi](x,y,v-w)=0\text{ if }|x-y|>R\,,
\end{aligned}
\right.
\]
for some $R>0$, a real parameter which can be thought of as being the radius of the sphere of influence of a gas molecule\footnote{In other words, $R$ is the distance beyond which a point particles 
is not affected by the presence of one gas molecule. Assuming short range molecular interactions, one can think of $R$ as being of the order of a few van der Waals radii of the gas molecules.} of the 
gas molecules, together with the bound
\[
\begin{aligned}
{}&(iv) &&\sup_{v,w\in\mathbf R^3}\frac1{1+|v-w|}\int_{(\mathbf R^3)^2}B[\rho,\varpi](x,y,v-w)\mathbf 1_{|x|\le A}dxdy<\infty\,.
\end{aligned}
\]

Let us check that the examples of delocalized collision integrals considered in section \ref{sec:intro} can be put in the form \eqref{GenCollInt}.

\noindent
\underline{(1) Enskog-Boltzmann equation:} in that case
\[
B[\rho,\varpi](x,y,v-w)=(v-w|n_{x,y})|\delta_{2r}(|x-y|)\,,
\]
(and of course, the collision kernel is independent of the macroscopic densities $\rho,\varpi$);

\noindent
\underline{(2) Soft spheres:} in that case
\[
B[\rho,\varpi](x,y,v-w)=\theta_{\eps,r}(|x-y|)|(v-w|n_{x,y})|
\]
(and of course, the collision kernel is independent of the macroscopic densities $\rho,\varpi$);

\noindent
\underline{(3) Povzner-type equation:} in that case
\[
B[\rho,\varpi](x,y,v-w)=b(|x-y|)|(v-w|n_{x,y})|
\]
(and of course, the collision kernel is independent of the macroscopic densities $\rho,\varpi$). This type of collision kernel will be used below in section \ref{ssec:Entropy}  in connection with the H Theorem;

\noindent
\underline{(4) Enskog equation:} in that case
\[
B[\rho,\varpi](x,y,v-w)=Y[\rho(x),\varpi(y)]|(v-w|n_{x,y})|\delta_{2r}(|x-y|)\,.
\]

\subsection{Global Conservation Laws for Delocalized Collision Integrals}\label{ssec:GlobConsDelocCollInt}


The following result is well-known in the case of the Enskog equation \cite{Polewczak,EstebanPerth}. We briefly sketch its proof in the more general setting considered here.

\begin{proposition}\label{P-GlobCons}
For all measurable, rapidly decaying $f,g$ defined a.e. on  $\mathbf R^3\times\mathbf R^3$, the collision integral $\mathrm{St}$ satisfies the local mass conservation identity
\[
\int_{\mathbf R^3}\mathrm{St}[f,g](x,v)dv=0\,.
\]
It also satisfies the global momentum and energy conservation identities
\[
\int_{\mathbf R^3\times\mathbf R^3}\mathrm{St}[f,f](x,v)v_jdxdv=\int_{\mathbf R^3\times\mathbf R^3}\mathrm{St}[f,f](x,v)|v|^2dxdv=0\,,\quad j=1,2,3\,.
\]
\end{proposition}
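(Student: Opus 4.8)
The plan is to prove the three conservation identities by the classical weak-formulation (symmetrization) technique, exploiting the two structural symmetries of the collision kernel $B$, namely properties (i) and (ii), together with the invariance of the collision transformation \eqref{CollTransfo} under the involution $(v,w)\mapsto(v',w')$. First I would introduce a test-function device: for any sufficiently decaying $\varphi\equiv\varphi(x,v)$, I would write out the weak form
\[
I[\varphi]:=\int_{(\mathbf R^3)^3}\mathrm{St}[f,g](x,v)\varphi(x,v)\,dv\,dx\,,
\]
expand $\mathrm{St}$ using \eqref{GenCollInt}, and recognize that the gain and loss terms can be recombined. The key manipulation is to change variables in the gain term by the collision map $(v,w)\mapsto(v'(v,w,n_{x,y}),w'(v,w,n_{x,y}))$, which is an involution with unit Jacobian and which swaps the indicators $\mathbf 1_{(v-w|n_{x,y})>0}$ and $\mathbf 1_{(v-w|n_{x,y})<0}$; property (ii) guarantees that $B$ is left unchanged by this substitution. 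After also relabelling $w\leftrightarrow w'$ consistently, this produces the symmetric weak identity
\[
I[\varphi]=\int B\,f(x,v)g(y,w)\mathbf 1_{(v-w|n_{x,y})<0}\big(\varphi(x,v')-\varphi(x,v)\big)\,dy\,dw\,dv\,dx\,.
\]

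Once this weak form is established, the local mass conservation follows immediately by taking $\varphi\equiv 1$ (or any function of $x$ alone): the bracket $\varphi(x,v')-\varphi(x,v)$ vanishes identically, giving $\int_{\mathbf R^3}\mathrm{St}[f,g](x,v)\,dv=0$ pointwise in $x$, which is the first claim. The momentum and energy statements are global (integrated in $x$ as well), so here I would also symmetrize in the position/species variables using property (i), i.e. interchange the roles of $x$ and $y$ and of $v$ and $w$ simultaneously. For $g=f$ this symmetrization lets me replace the single increment $\varphi(x,v')-\varphi(x,v)$ by the fully symmetrized combination
\[
\tfrac14\big(\varphi(x,v')+\varphi(y,w')-\varphi(x,v)-\varphi(y,w)\big)
\]
up to a sign, modulo the exchange of integration variables. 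Choosing $\varphi(x,v)=v_j$ and $\varphi(x,v)=|v|^2$ then makes this symmetrized bracket collapse by the microscopic conservation laws \eqref{MicroCons}: $v'+w'=v+w$ kills the momentum bracket, and $|v'|^2+|w'|^2=|v|^2+|w|^2$ kills the energy bracket.

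The main obstacle, and the step requiring the most care, is justifying the change of variables and the Fubini interchanges for merely rapidly decaying, non-smooth $f,g$, and in particular handling the singular kernels where $B$ contains a Dirac mass $\delta_{2r}(|x-y|)$ (the Enskog and Enskog-Boltzmann cases). For these I would either argue first for smooth compactly supported approximations and pass to the limit, or interpret the $|x-y|$-integration against the surface measure induced by $\delta_{2r}$ so that the substitution $z=x-y\mapsto -z$ (equivalently $n_{x,y}\mapsto n_{y,x}=-n_{x,y}$) is meaningful; the integrability bound (iv) together with the rapid decay of $f,g$ supplies the absolute convergence needed to legitimize all these operations. I would also note that for the momentum and energy identities the position integral is essential precisely because the increment $\varphi(x,v')-\varphi(x,v)$ is evaluated at the \emph{same} position $x$ (unlike the local Boltzmann case), so that no pointwise-in-$x$ cancellation is available and the delocalization is genuinely felt; this is exactly why these two identities are only claimed in integrated form.
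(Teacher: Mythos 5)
Your proposal follows essentially the same route as the paper's own proof: a change of variables by the collision involution in the gain term (using property (ii) and the sign flip $(v'-w'|n_{x,y})=-(v-w|n_{x,y})$) yields the weak identity, mass conservation follows pointwise in $x$ since the increment vanishes for $v$-independent test functions, and the global momentum and energy identities follow by additionally swapping $(x,v)\leftrightarrow(y,w)$ via property (i), averaging the two representations, and invoking \eqref{MicroCons}. The only slip is the prefactor of the symmetrized bracket, which should be $\tfrac12$ rather than $\tfrac14$, but this is immaterial since the bracket vanishes identically for $\varphi=v_j$ and $\varphi=|v|^2$.
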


\begin{proof}
For all measurable, rapidly decaying $f,g$ defined a.e. on $\mathbf R^3\times\mathbf R^3$,
\[
\begin{aligned}
\int_{\mathbf R^3}\mathrm{St}[f,g](x,v)dv=\int_{(\mathbf R^3)^3}\big(f(x,v'(v,w,n))g(y,w'(v,w,n))\mathbf 1_{(v-w|n_{x,y})>0}
\\
-f(x,v)f(y,w)\mathbf 1_{(v-w|n_{x,y})<0}\big)B[\rho_f,\rho_g](x,y,v-w)dvdwdy
\\
=\int_{(\mathbf R^3)^3}\big(B[\rho_f,\rho_g](x,y,v'(v,w,n_{x,y})-w'(v,w,n_{x,y}))\mathbf 1_{(v'(v,w,n_{x,y})-w'(v,w,n_{x,y}|n_{x,y})>0}
\\
-B[\rho_f,\rho_g](x,y,v-w)\mathbf 1_{(v-w|n_{x,y})<0}\big)f(x,v)g(y,w)dvdwdy
\\
\text{(with the change of variables $(v,w)\mapsto(v'(v,w,n),w'(v,w,n))$ for each $n$}
\\
=\int_{(\mathbf R^3)^3}\big(B[\rho_f,\rho_g](x,y,v'(v,w,n_{x,y})-w'(v,w,n_{x,y}))-B[\rho_f,\rho_g](x,y,v-w)\big)
\\
\times \mathbf 1_{(v-w|n_{x,y})<0}f(x,v)g(y,w)dvdwdy=0&\,.
\\
\text{(since $(v'(v,w,n)-w'(v,w,n)|n)=-(v-w|n)$)}
\end{aligned}
\]
by assumption (ii) on $B$. Similarly
\[
\begin{aligned}
\int_{(\mathbf R^3)^2}\mathrm{St}[f,f](x,v)vdvdx
\\
=\int_{(\mathbf R^3)^4}\big(f(x,v'(v,w,n_{x,y}))f(y,w'(v,w,n_{x,y}))\mathbf 1_{(v-w|n_{x,y})>0}
\\
-f(x,v)f(y,w)\mathbf 1_{(v-w|n_{x,y})<0}\big)B[\rho_f,\rho_f](x,y,v-w)vdxdvdydw
\\
=\int_{(\mathbf R^3)^4}\big(B[\rho_f,\rho_f](x,y,v'(v,w,n_{x,y})-w'(v,w,n_{x,y}))v'(v,w,n_{x,y})
\\
-B[\rho_f,\rho_f](x,y,v-w)v\big)\mathbf 1_{(v-w|n_{x,y})<0}f(x,v)f(y,w)dxdvdydw
\\
\text{(with the change of variables $(v,w)\mapsto(v'(v,w,n),w'(v,w,n))$ for each $n$}
\\
\text{since $(v'(v,w,n)-w'(v,w,n)|n)=-(v-w|n)$)}
\\
=\int_{(\mathbf R^3)^4}\big(B[\rho_f,\rho_f](y,x,v'(w,v,n_{y,x})-w'(w,v,n_{y,x}))v'(w,v,n_{y,x})
\\
-B[\rho_f,\rho_f](y,x,w,v)w\big)\mathbf 1_{(w-v|n_{y,x})<0}f(x,v)f(y,w)dxdvdydw
\\
\text{(exchanging the molecule at $x$ with velocity $v$}
\\
\text{and the molecule at $y$ with velocity $w$ by assumption (i) on $B$)}
\\
=\int_{(\mathbf R^3)^4}\big(B[\rho_f,\rho_f](y,x,w'(v,w,n_{x,y})-v'(v,w,n_{x,y}))w'(v,w,n_{x,y})
\\
-B[\rho_f,\rho_f](y,x,w,v)w\big)\mathbf 1_{(w-v|n_{y,x})<0}f(x,v)f(y,w)dxdvdydw
\\
\text{(since $v'(w,v,n)=w'(v,w,n)=w'(v,w,-n)$}
\\
\text{and $w'(w,v,n)=v'(v,w,n)=v'(v,w,-n)$)}
\\
=\int_{(\mathbf R^3)^4}\big(B[\rho_f,\rho_f](x,y,v'(v,w,n_{x,y})-w'(v,w,n_{x,y}))w'(v,w,n_{x,y})
\\
-B[\rho_f,\rho_f](x,y,v,w)w\big)\mathbf 1_{(v-w|n_{x,y})<0}f(x,v)f(y,w)dxdvdydw&\,.
\\
\text{(using assumption (i) on $B$))}
\end{aligned}
\]
Summarizing, the first left-hand side is equal to the right-hand side of either the second or the sixth equality, and therefore to the mean of these expressions, i.e.
\[
\begin{aligned}
\int_{\mathbf R^3\times\mathbf R^3}\mathrm{St}[f,f](x,v)vdvdx
\\
=\int_{(\mathbf R^3)^4}\big(B[\rho_f,\rho_f](x,y,v'(v,w,n_{x,y})-w'(v,w,n_{x,y}))\tfrac{v'(v,w,n_{x,y})+w'(v,w,n_{x,y})}2
\\
-B[\rho_f,\rho_f](x,y,v-w)\tfrac{v+w}2\big)\mathbf 1_{(v-w|n_{x,y})<0}f(x,v)f(y,w)dxdvdydw=0
\end{aligned}
\]
by assumption (ii) on $B$, and because of the first identity in \eqref{MicroCons}.

Proceeding in exactly the same manner, we see that
\[
\begin{aligned}
\int_{\mathbf R^3\times\mathbf R^3}\mathrm{St}[f,f](x,v)|v|^2dvdx
\\
=\int_{(\mathbf R^3)^4}\big(B[\rho_f,\rho_f](x,y,v'(v,w,n_{x,y})-w'(v,w,n_{x,y}))\tfrac{|v'(v,w,n_{x,y})|^2+|w'(v,w,n_{x,y})|^2}2
\\
-B[\rho_f,\rho_f](x,y,v-w)\tfrac{|v|^2+|w|^2}2\big)\mathbf 1_{(v-w|n_{x,y})<0}f(x,v)f(y,w)dxdvdydw=0
\end{aligned}
\]
again by  assumption (ii) on $B$, and because of the second identity in \eqref{MicroCons}. This concludes the proof. 
\end{proof}

\subsection{Landau Currents for Delocalized Collision Integrals}\label{ssec:LandauCurrents}


By analogy with the arguments in \S41 in \cite{LL10} and in \cite{VillaniLandauCurr} leading to a representation of the Boltzmann collision integral as the divergence of a ``Landau'' mass current in the velocity
variable, our first main result in this paper is to represent delocalized collision integrals \eqref{GenCollInt} in terms of divergences of Landau mass, momentum and energy currents. The main difference with the 
Boltzmann case is that only the global conservation laws of momentum and energy are known to be satisfied by delocalized collision integrals \eqref{GenCollInt}. As a result, one should seek representations in 
terms of divergences of Landau momentum and energy currents \textit{in both the position and velocity variables}.

\begin{definition}\label{D-DisintMassCurrent}
For each rapidly decaying $f,g\in C(\mathbf R^3\times\mathbf R^3)$, and each rapidly decaying $\rho,\varpi\in C(\mathbf R^3)$, set
\[
\begin{aligned}
\mathfrak J_0[\rho,\varpi,f,g](x,y,v):=\int_{\mathbf R^3\times\mathbf R}B[\rho,\varpi](x,y,v-w)\mathbf 1_{0<s<(v-w|n_{y,x})}
\\
\times f(x,v+sn_{y,x})g(y,w+sn_{y,x})n_{y,x}dwds&\,.
\end{aligned}
\]
\end{definition}

Our first main result in this paper is the following theorem, whose proof can be found in section \ref{sec:ProofTCurrents}.

\begin{theorem}\label{T-Currents}
For each rapidly decaying $f\in C(\mathbf R^3\times\mathbf R^3)$,
\begin{equation}\label{DivMassCurrent}
\mathrm{St}[f,f]=\nabla_v\cdot\mathcal J_0[f]\,,\quad\text{ where }\mathcal J_0[f](x,v):=\int_{\mathbf R^3}\mathfrak J_0[\rho_f,\rho_f,f,f](x,y,v)dy\,.
\end{equation}
For $i=1,2,3$,
\begin{equation}\label{DivMomCurrents}
\mathrm{St}[f,f]v_i=\nabla_x\cdot\mathcal I_i[f]+\nabla_v\cdot\mathcal J_i[f]\,,
\end{equation}
where
\[
\begin{aligned}
\mathcal I_i[f](x,v):=-\tfrac12\int_{(\mathbf R^3)^2}\int_0^{\frac{\pi}2}\!B[\rho_f,\rho_f](x_{-\theta},y_{-\theta},v_{-\theta}\!-\!w_{-\theta})(v_{-\theta}\!-\!w_{-\theta}|n_{y_{-\theta},x_{-\theta}})_+
\\
\times f(x_{-\theta},v_{-\theta})f(y_{-\theta},w_{-\theta})(n_{y_{-\theta},x_{-\theta}})_iyd\theta dydw&
\end{aligned}
\]
where
\[
\left(\begin{matrix}x_\theta\\ y_\theta\end{matrix}\right):=\left(\begin{matrix}\cos\theta\,\,\,&\sin\theta\\ -\sin\theta&\cos\theta\end{matrix}\right)\left(\begin{matrix}x\\ y\end{matrix}\right)
\qquad\text{ and }\qquad
\left(\begin{matrix}v_\theta\\ w_\theta\end{matrix}\right):=\left(\begin{matrix}\cos\theta\,\,\,&\sin\theta\\ -\sin\theta&\cos\theta\end{matrix}\right)\left(\begin{matrix}v\\ w\end{matrix}\right)\,,
\]
and
\[
\begin{aligned}
\mathcal J_i[f,f](x,v):=&\int_{\mathbf R^3}\mathfrak J_0[\rho_f,\rho_f,(v_i-(v|n_{x,y})(n_{x,y})_i)f,f](x,y,v)dy
\\
&+\int_{\mathbf R^3}\mathfrak J_0[\rho_f,\rho_f,f,(w|n_{x,y})(n_{x,y})_if])(x,y,v)dy
\\
&-\!\tfrac12\int_{(\mathbf R^3)^4}\!\int_0^{\frac{\pi}2}\!\!B[\rho_f,\rho_f](x_{-\theta},y_{-\theta},v_{-\theta}\!-\!w_{-\theta})(v_{-\theta}\!-\!w_{-\theta}|n_{y_{-\theta},x_{-\theta}})_+
\\
&\qquad\qquad\qquad\times f(x_{-\theta},v_{-\theta})f(y_{-\theta},w_{-\theta})(n_{y_{-\theta},x_{-\theta}})_iwd\theta dydw\,.
\end{aligned}
\]
Finally
\begin{equation}\label{DivEnergyCurrent}
\mathrm{St}[f,f]|v|^2=\nabla_x\cdot\mathcal I_4[f]+\nabla_v\cdot\mathcal J_4[f]\,,
\end{equation}
where
\[
\begin{aligned}
\mathcal I_4[f](x,v)\!:=\!-\tfrac12\!\int_{(\mathbf R^3)^2}\!\int_0^{\frac{\pi}2}\!B[\rho_f,\rho_f](x_{-\theta},y_{-\theta},v_{-\theta}\!-\!w_{-\theta})(v_{-\theta}\!-\!w_{-\theta}|n_{y_{-\theta},x_{-\theta}})_+
\\
\times(v_{-\theta}\!+\!w_{-\theta}|n_{y_{-\theta},x_{-\theta}})f(x_{-\theta},v_{-\theta})f(y_{-\theta},w_{-\theta})yd\theta dydw&\,,
\end{aligned}
\]
and
\[
\begin{aligned}
\mathcal J_4[f](x,v):=&\int_{\mathbf R^3}\mathfrak J_0[\rho_f,\rho_f,(|v|^2\!-\!(v|n_{x,y})^2)f,f](x,y,v)dy
\\
&+\int_{\mathbf R^3}\mathfrak J_0[\rho_f,\rho_f,f,(w|n_{x,y})^2f](x,y,v)dy
\\
&-\tfrac12\int_{(\mathbf R^3)^2}\int_0^{\frac{\pi}2}\!B[\rho_f,\rho_f](x_{-\theta},y_{-\theta},v_{-\theta}\!-\!w_{-\theta})(v_{-\theta}\!-\!w_{-\theta}|n_{y_{-\theta},x_{-\theta}})_+
\\
&\quad\qquad\qquad\times\!\!(v_{-\theta}\!+\!w_{-\theta}|n_{y_{-\theta},x_{-\theta}})f(x_{-\theta},v_{-\theta})f(y_{-\theta},w_{-\theta})wd\theta dydw\,.
\end{aligned}
\]
\end{theorem}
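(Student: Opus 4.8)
The plan is to prove all three identities in their weak (distributional) form, pairing the left-hand sides against smooth compactly supported test functions and integrating by parts. The mass identity \eqref{DivMassCurrent} is the model case and I would treat it first: pair $\mathrm{St}[f,f](x,\cdot)$ against $\phi\in C^\infty_c(\mathbf R^3_v)$, and symmetrize the gain and loss parts exactly as in the proof of Proposition \ref{P-GlobCons}, using the measure-preserving collisional involution $(v,w)\mapsto(v'(v,w,n_{x,y}),w'(v,w,n_{x,y}))$ together with hypotheses (i)--(ii) on $B$. This collapses the pairing to $\int f(x,v)f(y,w)(\phi(v')-\phi(v))\mathbf 1_{(v-w|n_{x,y})<0}B\,dv\,dw\,dy$, on whose support $v'=v-(v-w|n_{y,x})n_{y,x}$. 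Writing $\phi(v')-\phi(v)$ by the fundamental theorem of calculus along $s\mapsto v-sn_{y,x}$ with $0<s<(v-w|n_{y,x})$, and then applying the measure-preserving translation $(v,w)\mapsto(v+sn_{y,x},w+sn_{y,x})$, I would recognize the inner integral as $\mathfrak J_0$ of Definition \ref{D-DisintMassCurrent} and conclude $\int\mathrm{St}[f,f]\phi\,dv=-\int\nabla_v\phi\cdot\mathcal J_0[f]\,dv$.

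For the momentum identity \eqref{DivMomCurrents} I would pair $v_i\mathrm{St}[f,f]$ against a phase-space test function $\psi\in C^\infty_c(\mathbf R^3_x\times\mathbf R^3_v)$ and run the same symmetrization, so that the transported weight becomes $v_i'\psi(x,v')-v_i\psi(x,v)$. Using $v_i'=v_i-(v-w|n_{y,x})(n_{y,x})_i$ I would split this into a transport piece $v_i[\psi(x,v')-\psi(x,v)]$ and a momentum-transfer piece $-(v-w|n_{y,x})(n_{y,x})_i\psi(x,v')$. The transport piece is handled exactly as in the mass case; after the FTC and translation, decomposing $v$ into its components tangent and normal to $n_{x,y}$ and using that the translation direction $n_{y,x}$ is parallel to $n_{x,y}$ (so the tangential weight $v_i-(v|n_{x,y})(n_{x,y})_i$ is invariant under the shift) reproduces the first $\mathfrak J_0$-current in $\mathcal J_i$, leaving a normal remainder.

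The crux is the momentum-transfer piece together with the normal remainder: these encode a genuine exchange of momentum between the partners located at $x$ and $y$, which is exactly why a pure velocity divergence cannot account for $v_i\mathrm{St}$ and a position-divergence term is forced upon us. Here I would symmetrize under the exchange $(x,v)\leftrightarrow(y,w)$ via hypothesis (i) and average, producing the factor $\tfrac12$ as in Proposition \ref{P-GlobCons} and turning the transfer into a difference of $\psi$ evaluated at the two collisional configurations. The device for writing this difference as a phase-space divergence is the rotation $\theta\mapsto(x_{-\theta},y_{-\theta},v_{-\theta},w_{-\theta})$ on $[0,\tfrac\pi2]$: since $\tfrac{d}{d\theta}(x_{-\theta},y_{-\theta},v_{-\theta},w_{-\theta})=(-y_{-\theta},x_{-\theta},-w_{-\theta},v_{-\theta})$, the $\theta$-derivative of $\psi(x_{-\theta},v_{-\theta})$ contracts $\nabla_x\psi$ against $-y_{-\theta}$ and $\nabla_v\psi$ against $-w_{-\theta}$; integrating in $\theta$ and reading off the two contractions yields precisely the position-divergence current $\mathcal I_i$ (carrying the factor $y$) and the remaining rotation-integral term of $\mathcal J_i$ (carrying the factor $w$), while the endpoints $\theta=0,\tfrac\pi2$ match the velocity-divergence contributions already identified. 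The energy identity \eqref{DivEnergyCurrent} follows the identical scheme with $v_i$ replaced by $|v|^2$ and the algebraic splitting $|v'|^2=|v|^2-2(v-w|n_{y,x})(v|n_{y,x})+(v-w|n_{y,x})^2$, which produces $\mathcal I_4$ and $\mathcal J_4$.

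The main obstacle I anticipate is precisely the bookkeeping of this rotation: one must verify that the generator identity above, combined with the $(x,v)\leftrightarrow(y,w)$ symmetry from (i)--(ii), reassembles the transfer and normal-remainder terms into exactly the stated $\mathcal I_i,\mathcal I_4$ and the last terms of $\mathcal J_i,\mathcal J_4$, with all Jacobians trivial and all boundary terms at $\theta=0,\tfrac\pi2$ accounted for. Keeping the normal/tangential decomposition consistent between the $n_{x,y}$- and $n_{y,x}$-conventions, and confirming that the shift-dependence of the weight $(w|n_{x,y})(n_{x,y})_i$ under the translation is exactly absorbed, are the places where sign and indexing errors are most likely, so I would organize the whole computation around the weak identities $\int v_i\mathrm{St}\,\psi\,dx\,dv=-\int(\mathcal I_i\cdot\nabla_x\psi+\mathcal J_i\cdot\nabla_v\psi)\,dx\,dv$ and its energy analogue, verifying each current term against its designated contribution.
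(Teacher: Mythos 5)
Your overall architecture is the same as the paper's: the weak formulation obtained by the collisional change of variables (Lemma \ref{L-WkFormStfg}), the fundamental theorem of calculus along $s\mapsto v-sn_{y,x}$ followed by the translation $(v,w)\mapsto(v+sn_{y,x},w+sn_{y,x})$ to produce $\mathfrak J_0$-type velocity currents, and particle-exchange symmetrization plus the simultaneous rotation of $(x,y)$ and $(v,w)$ to produce the position currents; your treatment of \eqref{DivMassCurrent} is correct as it stands. The gap is in the momentum (and energy) case: your algebraic splitting does not reassemble into the currents that Theorem \ref{T-Currents} actually asserts, and the key claim of your ``crux'' paragraph fails. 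You write $v_i'\psi(x,v')-v_i\psi(x,v)=v_i[\psi(x,v')-\psi(x,v)]+(v_i'-v_i)\psi(x,v')$, attaching the transfer to the \emph{post}-collisional value $\psi(x,v')$, and your normal remainder carries the weight $(v|n_{x,y})(n_{x,y})_i$ on the first particle. Their sum equals $(w|n_{x,y})(n_{x,y})_i\psi(x,v')-(v|n_{x,y})(n_{x,y})_i\psi(x,v)$, and symmetrizing under $(x,v)\leftrightarrow(y,w)$ (using (i)) gives
\[
\tfrac12\Big[(w|n_{x,y})(n_{x,y})_i\big(\psi(x,v')-\psi(y,w)\big)+(v|n_{x,y})(n_{x,y})_i\big(\psi(y,w')-\psi(x,v)\big)\Big]\,,
\]
which is \emph{not} a single scalar factor times a difference of $\psi$ at two configurations joined by the rotation path: $(x_\theta,v_\theta)$ joins $(x,v)$ to $(y,w)$, and the rotation of the primed pair joins $(x,v')$ to $(y,w')$, but neither joins the mixed pairs appearing above. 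If instead you symmetrize your transfer piece alone, you do get a clean difference, but of the post-collisional configurations $\psi(y,w')-\psi(x,v')$; running the rotation on those (or first substituting $(v,w)\mapsto(v',w')$) yields position currents in which $f$ is composed with the collision map, not the stated $\mathcal I_i$ built from $f(x_{-\theta},v_{-\theta})f(y_{-\theta},w_{-\theta})$; and in either variant your scheme never produces the stated term $\mathfrak J_0[\rho_f,\rho_f,f,(w|n_{x,y})(n_{x,y})_if]$ of $\mathcal J_i$ at all.

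The splitting that produces exactly the stated currents is the other regrouping: $v_i'\psi(x,v')-v_i\psi(x,v)=v_i'[\psi(x,v')-\psi(x,v)]+(v_i'-v_i)\psi(x,v)$, followed by $v_i'=(v_i-(v|n_{x,y})(n_{x,y})_i)+(w|n_{x,y})(n_{x,y})_i$ (tangential part of $v$ plus normal part of $w$); the FTC-plus-translation step then converts these two weights into the two $\mathfrak J_0$ terms of $\mathcal J_i$, with the normal weight correctly landing on the second slot. The transfer $(v_i'-v_i)\psi(x,v)=-(v-w|n_{x,y})(n_{x,y})_i\psi(x,v)$ is attached to the \emph{pre}-collisional configuration; its scalar factor is antisymmetric under the exchange, so symmetrization yields $\tfrac12(v-w|n_{y,x})_+(n_{y,x})_i(\psi(x,v)-\psi(y,w))$, whose endpoints are precisely those of the rotation path, giving $\mathcal I_i$ and the last term of $\mathcal J_i$ verbatim. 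The energy case needs the identity $|v'|^2=|v|^2-(v|n_{x,y})^2+(w|n_{x,y})^2$ rather than your expansion (which would put the full weight $|v|^2$ on the first slot and produce no $(w|n_{x,y})^2$ term), with the transfer $((v|n_{x,y})^2-(w|n_{x,y})^2)\psi(x,v)=(v-w|n_{x,y})(v+w|n_{x,y})\psi(x,v)$ again attached to the pre-collisional configuration. Since currents are only determined up to divergence-free fields, your grouping might yield \emph{some} valid conservative representation, but not the formulas claimed in the theorem; as written, your proof does not establish the statement.
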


\smallskip
There is some arbitrariness in the choice of the currents $\mathcal J_0$ and $\mathcal I_i,\mathcal J_i$ for $i=1,\ldots,4$. All these currents are obviously not unique, but defined up to arbitrary 
divergence-free vector fields. One can certainly try to eliminate this arbitrariness by means of some appropriate ``gauge'' condition, but the choice of such conditions is quite arbitrary, and we have 
chosen not to investigate this issue any further.

\subsection{H Theorem}\label{ssec:Entropy}


In general, delocalized collision integrals are not known to satisfy a property analogous to Boltzmann's H Theorem, which is well known in the case of the Boltzmann collision integral \eqref{BoltzCollInt}.
However, variants of the H Theorem are known in some cases: see \cite{Resibois,CerciArke,Polewczak,BelloLacho2} for the Enskog case, or \cite{CerciCPAM} for the soft sphere case. 

In this section, we shall assume that the collision kernel is of the form
\begin{equation}\label{CollKernelHThm}
B[\rho,\varpi](x,y,v-w)=b(|x-y|)|(v-w|n_{x,y})|
\end{equation}
where 
\[
b\in C([0,+\infty))\text{ satisfies }g\ge 0\text{ on }[0,+\infty)\text{ and }\mathrm{supp}\,b\subset[0,R]\,.
\]
We begin with a global Boltzmann type H Theorem --- except that the analogue of Boltzmann's H function is modified to take into account delocalization effects in the collision integral.

\begin{theorem}\label{T-HThmGlob}
Assume that $f(t,x,v)$ is a classical solution of 
\[
(\partial_t+v\cdot\nabla_x)f(t,x,v)=\mathrm{St}[f,f](t,x,v)\,,
\]
with $(x,v)\mapsto f(t,x,v)$ rapidly decaying at infinity, while $(x,v)\mapsto\ln f(t,x,v)$ has at most polynomial growth at infinity. Then
\[
\frac{d}{dt}(\mathbf H_B[f]+\mathbf H_\beta[\rho_f])=-\Lambda[f]\le 0\,,
\]
where $\mathbf H_B[f]$ is the Boltzmann $H$ functional
\[
\mathbf H_B[f](t):=\int_{(\mathbf R^3)^2}f(t,x,v)\ln f(t,x,v)dxdv\,,
\]
and $\mathbf H_\beta[\rho]$ the potential interaction energy
\[
\mathbf H_\beta[\rho](t):=\tfrac12\int_{(\mathbf R^3)^2}\rho_f(t,x)\rho_f(t,y)\beta(|x-y|)dxdy\,,
\]
while
\[
\begin{aligned}
\Lambda[f](t):=\tfrac12\int_{(\mathbf R^3)^4}f(t,x,v)f(t,y,w)\ell\left(\frac{f(t,x,v'(v,w,n_{y,x}))f(t,y,w'(v,w,n_{y,x}))}{f(t,x,v)f(t,y,w)}\right)
\\
\times b(|x-y|)(v-w|n_{y,x})_+dxdvdydw
\end{aligned}
\]
is the entropy production term. In these formulas, we have used the notations
\[
\beta(r):=\int_r^\infty b(s)ds\,,\qquad\ell(z)=z-1-\ln z\ge 0\text{ for all }z>0\,.
\]
\end{theorem}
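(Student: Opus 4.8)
The plan is to differentiate $\mathbf H_B[f]+\mathbf H_\beta[\rho_f]$ along the flow of the kinetic equation and to show that the Boltzmann part produces the desired dissipation $-\Lambda[f]$ together with a transport-type remainder, which is then exactly cancelled by the time derivative of the interaction energy $\mathbf H_\beta$. First I would compute $\frac{d}{dt}\mathbf H_B[f]=\int_{(\mathbf R^3)^2}\partial_tf\,(\ln f+1)\,dxdv$. Substituting $\partial_tf=-v\cdot\nabla_xf+\mathrm{St}[f,f]$, the transport contribution $-\int v\cdot\nabla_x(f\ln f)\,dxdv$ vanishes after integrating by parts in $x$ (using the rapid decay of $f$), and the $+1$ contribution vanishes by the local mass conservation of Proposition~\ref{P-GlobCons}. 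This reduces the computation to $\int_{(\mathbf R^3)^2}\mathrm{St}[f,f]\ln f\,dxdv$.

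The second step is the classical Boltzmann symmetrization adapted to the delocalized kernel. Expanding $\mathrm{St}[f,f]$ against $\ln f(x,v)$, I would first apply, for each fixed pair $(x,y)$, the pre/post-collision change of variables $(v,w)\mapsto(v'(v,w,n_{x,y}),w'(v,w,n_{x,y}))$, which has unit Jacobian, flips the sign of $(v-w|n_{x,y})$, and leaves $B$ invariant by assumption (ii); I would then symmetrize using the exchange $(x,v)\leftrightarrow(y,w)$, under which $B$ is invariant by assumption (i) and $v'\leftrightarrow w'$. These two symmetries recast the integral in the standard symmetric form
\[
\tfrac12\int_{(\mathbf R^3)^4} f(x,v)f(y,w)\,b(|x-y|)(v-w|n_{y,x})_+\,\ln\frac{f(x,v')f(y,w')}{f(x,v)f(y,w)}\,dxdvdydw\,,
\]
where I used $\mathbf 1_{(v-w|n_{x,y})<0}|(v-w|n_{x,y})|=(v-w|n_{y,x})_+$. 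Invoking the identity $\ln z=(z-1)-\ell(z)$ with $z=f(x,v')f(y,w')/(f(x,v)f(y,w))$ splits this into $-\Lambda[f]$ plus the linear-in-$z$ remainder $T:=\tfrac12\int b(|x-y|)(v-w|n_{y,x})_+(f(x,v')f(y,w')-f(x,v)f(y,w))\,dxdvdydw$.

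The crux of the argument---and the feature absent from the local Boltzmann case, where the analogous remainder vanishes pointwise in $x$---is to identify $T$ with $-\frac{d}{dt}\mathbf H_\beta[\rho_f]$. I would evaluate $T$ by applying the pre/post-collision change of variables once more to its gain part; the two contributions combine into $\tfrac12\int b(|x-y|)(v-w|n_{x,y})\,f(x,v)f(y,w)\,dxdvdydw$ (with no positive part), and integrating in $v,w$ yields the purely macroscopic expression $\int b(|x-y|)\,(\int_{\mathbf R^3}vf(x,v)\,dv\,|\,n_{x,y})\,\rho_f(y)\,dxdy$ after using the $x\leftrightarrow y$ antisymmetry of $(v-w|n_{x,y})$. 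On the other side, differentiating $\mathbf H_\beta[\rho_f]$ and using the continuity equation $\partial_t\rho_f=-\nabla_x\cdot\!\int_{\mathbf R^3}vf\,dv$---itself a consequence of the local mass conservation---together with $\nabla_x\beta(|x-y|)=\beta'(|x-y|)n_{x,y}=-b(|x-y|)n_{x,y}$, produces exactly the negative of this expression. This matching, hinging on the relation $\beta'=-b$ between the interaction potential and the collision kernel, is the step I expect to require the most care.

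With $T=-\frac{d}{dt}\mathbf H_\beta[\rho_f]$ established, I would conclude that $\frac{d}{dt}\mathbf H_B[f]=-\frac{d}{dt}\mathbf H_\beta[\rho_f]-\Lambda[f]$, hence $\frac{d}{dt}(\mathbf H_B[f]+\mathbf H_\beta[\rho_f])=-\Lambda[f]$. Nonpositivity is then immediate, since $\ell\ge 0$ on $(0,\infty)$ and the remaining factors $f$, $f$, $b$, $(v-w|n_{y,x})_+$ are all nonnegative, so $\Lambda[f]\ge 0$. Throughout, the growth and decay hypotheses on $f$ and $\ln f$ justify the integrations by parts and the applications of Fubini's theorem, while the compact support of $b$ (assumption (iii)) guarantees integrability of all the delocalized integrals involved.
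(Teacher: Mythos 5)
Your proposal is correct and follows essentially the same route as the paper: reduce $\frac{d}{dt}\mathbf H_B[f]$ to $\int\mathrm{St}[f,f]\ln f\,dxdv$, symmetrize to the standard form, split $\ln z=(z-1)-\ell(z)$ to produce $-\Lambda[f]$, and identify the linear remainder with $-\frac{d}{dt}\mathbf H_\beta[\rho_f]$ using $\beta'=-b$ and local mass conservation. The only (immaterial) difference is in that last step: you integrate out the velocities first and invoke the macroscopic continuity equation, whereas the paper stays at the kinetic level, writing $(v\cdot\nabla_x+w\cdot\nabla_y)\beta(|x-y|)=-(v-w|n_{x,y})b(|x-y|)$, substituting the kinetic equation, and disposing of the collision terms by mass conservation --- the two computations are equivalent.
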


This result is not really original: it is a modest generalization of the inequality (2.9) in \cite{CerciArke}. However, this generalization, especially the potential part $\mathbf H_\beta$, may have some important 
implications, to be discussed later.

\smallskip
The idea of conservative formulation of delocalized collision integral can be used in the context of the H Theorem. The idea is that the 

\begin{theorem}\label{T-HThmLoc}
Assume that $f(t,x,v)$ is a classical solution of 
\[
(\partial_t+v\cdot\nabla_x)f(t,x,v)=\mathrm{St}[f,f](t,x,v)\,,
\]
with $(x,v)\mapsto f(t,x,v)$ rapidly decaying at infinity, while $(x,v)\mapsto\ln f(t,x,v)$ has at most polynomial growth at infinity. Then
\begin{equation}\label{LocHIneq}
 \begin{aligned}
\mathrm{St}[f,f](t,x,v)\ln f(t,x,v)\le&-\nabla_x\cdot\mathcal K[f](t,x,v)-\nabla_v\cdot\mathcal L[f](t,x,v)
\\
&-\tfrac12(\partial_t+v\cdot\nabla_x)\left(f(t,x,v)\rho_f(t,\cdot)\star\beta(|\cdot|)(x)\right)\,,
\end{aligned}
\end{equation}
in the sense of distributions. In this inequality, the notation $\star$ designates the convolution (in the variable $x$), viz.
\begin{equation}\label{DefConvol}
\rho_f(t,\cdot)\star\beta(|\cdot|)(x):=\int_{\mathbf R^3}\rho_f(t,x-y)\beta(|y|)dy\,,
\end{equation}
while
\[
\begin{aligned}
\mathcal K[f](t,x,v)\!:=\!\tfrac12\int_{(\mathbf R^3)^2}\!\int_0^{\pi/2}\!(f(t,x_{-\theta},v'(v,w,n_{y,x})_{-\theta})f(t,y_{-\theta},w'(v,w,n_{y,x})_{-\theta})
\\
-f(t,x_{-\theta},v_{-\theta})f(t,y_{-\theta},w_{-\theta}))\ln f(t,y_{-\theta},w_{-\theta})b(|x_{-\theta}-y_{-\theta}|)
\\
\times (v_{-\theta}-w_{-\theta}|n_{x_{-\theta},y_{-\theta}})_+yd\theta dydw&\,,
\end{aligned}
\]
while
\[
\begin{aligned}
\mathcal L[f](t,x,v):=\tfrac12\mathcal J_0[f](t,x,v)(1-\rho_f\star\beta(|\cdot|)(t,x))
\\
+\tfrac12(\mathcal J_0[f\ln f,f](t,x,v)+\mathcal J_0[f,f\ln f](t,x,v))
\\
+\tfrac12\int_{(\mathbf R^3)^2}\int_0^{\pi/2}(f(t,x_{-\theta},v'(v,w,n_{y,x})_{-\theta})f(t,y_{-\theta},w'(v,w,n_{y,x})_{-\theta})
\\
-f(t,x_{-\theta},v_{-\theta})f(t,y_{-\theta},w_{-\theta}))\ln f(t,y_{-\theta},w_{-\theta})b(|x_{-\theta}-y_{-\theta}|)
\\
\times (v_{-\theta}-w_{-\theta}|n_{x_{-\theta},y_{-\theta}})_+wd\theta dydw&\,.
\end{aligned}
\]
In this equality, the notation $\mathcal J_0[f,g]$ designates the current
\[
\begin{aligned}
\mathcal J_0[f,g](t,x,v):=\int_{(\mathbf R^3)^2\times\mathbf R}b(|x-y|)(v-w|n_{y,x})\mathbf 1_{0<s<(v-w|n_{y,x})}
\\
\times f(t,x,v+sn_{y,x})g(t,y,w+sn_{y,x})n_{y,x}dydwds&\,.
\end{aligned}
\]
\end{theorem}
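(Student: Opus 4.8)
The plan is to prove the sharper pointwise identity
\[
\mathrm{St}[f,f]\ln f+\nabla_x\cdot\mathcal K[f]+\nabla_v\cdot\mathcal L[f]+\tfrac12(\partial_t+v\cdot\nabla_x)\big(f\,\rho_f\star\beta(|\cdot|)(x)\big)=-\mathcal D[f],
\]
where, writing $f=f(t,x,v)$, $f_*=f(t,y,w)$, $f'=f(t,x,v'(v,w,n_{y,x}))$, $f'_*=f(t,y,w'(v,w,n_{y,x}))$, the local entropy production density is
\[
\mathcal D[f]:=\tfrac12\int_{(\mathbf R^3)^2}ff_*\,\ell\!\left(\tfrac{f'f'_*}{ff_*}\right)b(|x-y|)(v-w|n_{y,x})_+\,dy\,dw\,.
\]
Since $\ell\ge0$ and $f,b,(v-w|n_{y,x})_+\ge0$, one has $\mathcal D[f]\ge0$ pointwise, so \eqref{LocHIneq} follows by discarding the nonpositive term $-\mathcal D[f]$. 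Before computing, I would record the consistency check that fixes all constants: integrating this identity in $(x,v)$ kills the two divergences, turns the transport term into $\partial_t\mathbf H_\beta[\rho_f]$ (using $\int\mathrm{St}\,dv=0$ from Proposition \ref{P-GlobCons} together with $\int f\,\rho_f\star\beta\,dxdv=2\mathbf H_\beta$), and turns $\int\mathrm{St}\ln f\,dxdv$ into $\tfrac{d}{dt}\mathbf H_B[f]$; one then recovers exactly Theorem \ref{T-HThmGlob} with $\int\mathcal D[f]\,dxdv=\Lambda[f]$.

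The engine of the proof is the conservative representation of Theorem \ref{T-Currents}, which I would use in the bilinear form $\nabla_v\cdot\mathcal J_0[F,G]=\mathrm{St}[F,G]$, an immediate extension of \eqref{DivMassCurrent}: for the H-theorem kernel \eqref{CollKernelHThm} the collision kernel does not depend on the macroscopic densities, so the computation establishing \eqref{DivMassCurrent} is bilinear in the two distribution-function slots. Taking $(F,G)=(f\ln f,f)$ and $(f,f\ln f)$ gives $\nabla_v\cdot\tfrac12(\mathcal J_0[f\ln f,f]+\mathcal J_0[f,f\ln f])=\tfrac12(\mathrm{St}[f\ln f,f]+\mathrm{St}[f,f\ln f])$, which supplies precisely the $f'f'_*(\ln f'+\ln f'_*)$ and $ff_*(\ln f+\ln f_*)$ blocks entering $\mathcal D[f]$. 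This is the velocity-divergence incarnation of the collision involution $(v,w)\mapsto(v',w')$, which relates $\ln f$ at $v$ to $\ln f$ at $v'$ but is \emph{not} available as a change of variables at fixed $v$; this is why $\mathcal J_0$ and its $f\ln f$ variants appear in $\mathcal L$.

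Next I would treat the potential term by differentiating $\tfrac12 f\,\rho_f\star\beta$ along the transport field and inserting the equation $(\partial_t+v\cdot\nabla_x)f=\mathrm{St}[f,f]$. Since $\rho_f\star\beta$ is independent of $v$, its $\mathrm{St}$-contribution equals $\nabla_v\cdot(\tfrac12\rho_f\star\beta\,\mathcal J_0[f])$ and cancels the $-\tfrac12\mathcal J_0[f]\,\rho_f\star\beta$ piece of $\nabla_v\cdot\mathcal L$, so that only the $\tfrac12\mathcal J_0[f]\cdot 1$ term survives, as $\tfrac12\mathrm{St}$. The remaining piece $\tfrac12 f(\partial_t+v\cdot\nabla_x)(\rho_f\star\beta)$, rewritten through $\partial_t\rho_f=-\nabla_x\cdot\int vf\,dv$ (again local mass conservation), is the genuine delocalization contribution; here the potential $\beta(r)=\int_r^\infty b(s)\,ds$ enters exactly as the radial primitive of the kernel $b$, i.e.\ as the mean-field pair potential whose force is $b$, produced by a fundamental-theorem-of-calculus in the variable $|x-y|$.

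The last and hardest step is the particle exchange $(x,v)\leftrightarrow(y,w)$, which is needed to symmetrize what remains into $\mathcal D[f]$ but is unavailable at fixed $(x,v)$. I would realize it exactly as in the proof of \eqref{DivMomCurrents}--\eqref{DivEnergyCurrent}: writing $\Phi:=(f'f'_*-ff_*)\ln f_*\,b(|x-y|)(v-w|n_{x,y})_+$, the exchange defect is $\int_0^{\pi/2}\tfrac{d}{d\theta}\Phi(x_{-\theta},y_{-\theta},v_{-\theta},w_{-\theta})\,d\theta$, and the relations $\tfrac{d}{d\theta}x_{-\theta}=-y_{-\theta}$, $\tfrac{d}{d\theta}y_{-\theta}=x_{-\theta}$ (and likewise for $v,w$) let one split $\tfrac{d}{d\theta}\Phi$ into a phase-space divergence $\nabla_x\cdot(\,\cdot\,y)+\nabla_v\cdot(\,\cdot\,w)$ --- which is precisely $\mathcal K$ and the third term of $\mathcal L$ --- plus divergences in the integration variables $y,w$ that vanish upon $dy\,dw$ integration. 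The main obstacle, and the step I expect to be most delicate, is the verification that after these three reductions every non-divergence, non-transport leftover recombines into $-\mathcal D[f]$ and nothing more: because the weight $\ln f_*$ is neither a collision invariant nor polynomial, one cannot shortcut the bookkeeping using Proposition \ref{P-GlobCons}, and must instead track the interplay of the $\ln f'+\ln f'_*-\ln f-\ln f_*$ and $f'f'_*-ff_*$ blocks through the rotation. Throughout, the rapid decay of $f$, the at-most-polynomial growth of $\ln f$, and $\mathrm{supp}\,b\subset[0,R]$ ensure absolute convergence, legitimacy of differentiation under the integral, and vanishing of the boundary terms at infinity, so that the identity --- hence \eqref{LocHIneq} --- holds in $\mathcal D'$.
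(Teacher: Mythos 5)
Your plan is, in its architecture, exactly the paper's proof: the paper starts from the weak formulation of Lemma \ref{L-WeakFormStff} tested against $\psi\ln f$, produces the velocity--divergence terms through the bilinear currents $\mathcal J_0[f\ln f,f]+\mathcal J_0[f,f\ln f]$ (legitimate, as you say, because for the kernel \eqref{CollKernelHThm} the representation \eqref{DivMassCurrent} is bilinear in the two slots), handles the potential term through the identity \eqref{PotDer}, the kinetic equation and local mass conservation from Proposition \ref{P-GlobCons}, and realizes the exchange $(x,v)\leftrightarrow(y,w)$ by the rotation $\theta\mapsto(x_\theta,y_\theta,v_\theta,w_\theta)$, exactly as in the proofs of \eqref{DivMomCurrents}--\eqref{DivEnergyCurrent}. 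So all four of your building blocks are the paper's.

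The one step you explicitly deferred --- ``every leftover recombines into $-\mathcal D[f]$ and nothing more'' --- is, however, precisely where your claimed identity fails. With $\mathcal K[f],\mathcal L[f]$ fixed as in the statement, the nonpositive remainder produced by the computation is \emph{not} $-\mathcal D[f]$. In the paper's decomposition the production term is
\[
\Lambda[\psi]=\tfrac12\int_{(\mathbf R^3)^4}\psi(x,v'(v,w,n_{y,x}))\,
\ell\!\left(\frac{f'f'_*}{ff_*}\right)ff_*\,b(|x-y|)\,(v-w|n_{y,x})_+\,dxdvdydw\,,
\]
with the test function evaluated at the \emph{post-collisional} velocity $v'$, not at $v$ (in your notation $f=f(t,x,v)$, $f_*=f(t,y,w)$, $f'=f(t,x,v')$, $f'_*=f(t,y,w')$). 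After the collisional change of variables $(v,w)\mapsto(v',w')$ this equals $\int\psi(x,v)\,\widetilde{\mathcal D}[f](x,v)\,dxdv$ with
\[
\widetilde{\mathcal D}[f](x,v)=\tfrac12\int_{(\mathbf R^3)^2} f'f'_*\,
\ell\!\left(\frac{ff_*}{f'f'_*}\right)b(|x-y|)\,(v-w|n_{x,y})_+\,dydw\,,
\]
which has the same total integral as your $\mathcal D[f]$ --- so your global consistency check against Theorem \ref{T-HThmGlob} cannot detect the difference --- but differs from it pointwise: the weight is $f'f'_*$ rather than $ff_*$, the argument of $\ell$ is inverted, and the indicator selects $(v-w|n_{x,y})>0$ rather than $(v-w|n_{y,x})>0$. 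Conversely, the decomposition that does yield your $\mathcal D[f]$ (expanding around $\psi(x,v)$ instead of $\psi(x,v')$) transports the logarithms $\ln f'$, $\ln f'_*$ with gain-type weights, and therefore produces currents different from the $\ln f_*$ and $\ln(ff_*)$ blocks appearing in $\mathcal K$ and $\mathcal L$; so the identity you assert, pairing the theorem's currents with your $\mathcal D$, cannot hold as an identity. This is a repairable inaccuracy rather than a fatal gap: since $\widetilde{\mathcal D}[f]\ge 0$ just as well, discarding it gives \eqref{LocHIneq} unchanged --- but when you execute your plan you must accept the production density dictated by the decomposition compatible with the stated $\mathcal K,\mathcal L$, not the one you guessed.
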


In other words
\[
\mathcal J_0[f,g](t,x,v)=\int_{\mathbf R^3}\mathfrak J_0[\rho_f,\rho_g,f,g](t,x,y,v)dy
\]
with $\mathfrak J_0[\rho_f,\rho_g,f,g]$ as in Definition \ref{D-DisintMassCurrent}. However, here, the situation is much simpler, since
\[
B[\rho_f,\rho_g](x,y,v-w):=b(|x-y|)|(v-w|n_{x,y})|
\]
is independent of $\rho_f$ and $\rho_g$, so that $\mathfrak J_0[\rho_f,\rho_g,f,g](t,x,y,v)$ is also independent of $\rho_f$ and $\rho_g$.

The proofs of Theorems 3.4 and 3.5 can be found in section \ref{sec:EntropIneq}.


\section{Applications to the Macroscopic Description of Dense Gases}\label{sec:Appli}


\subsection{Local Conservation of Mass, Momentum and Energy}


The discussion in section \ref{sec:LocConsLaws} makes it very clear that the Boltzmann collision integral does not contribute to the stress tensor in \eqref{EulEqBoltz}, i.e. the pressure tensor $P$. This may
seem somewhat disturbing, since one would expect that collisions between gas molecules should account for the acceleration driving the fluid. 

This last point is of course not entirely true. Indeed, even in a gas of noninteracting point particles, i.e. if the distribution function satisfies
\[
(\partial_t+v\cdot\nabla_x)f(t,x,v)=0\,,
\]
the pressure tensor $P$ defined in \eqref{DefP} can be computed explicitly, and it is neither equal to zero (unless $f(t,x,v)=\rho(t,x)\delta(v-u(t,x))$), nor a constant in general. It can be computed explicitly for 
instance in the case where
\[
f(0,x,v)=\exp(-(|x|^2+|v|^2))\,.
\]
One finds indeed that
\[
f(t,x,v)=\exp\left(-\left(\tfrac{|x|^2}{1+t^2}+(1+t^2)|v-\tfrac{t}{1+t^2}x|^2\right)\right)\,,
\]
so that 
\[
P(t,x)=\tfrac{\pi^{3/2}}{2(1+t^2)^{5/2}}\exp\left(-\tfrac{|x|^2}{1+t^2}\right)I\,.
\]
This elementary example shows that the force deriving from the pressure may be non trivial even in the absence of collisions.

In the case of the Boltzmann equation, the contribution of collisions to the pressure tensor is indirect: in the highly collisional regime, collisions induce a relaxation of the distribution function $f(t,x,v)$ to a local 
Maxwellian distribution, a radial function of $v-u(t,x)$, so that the current $Q$ vanishes identically, and $\rho,u$ defined in \eqref{DefRhoU} and $p:=\mathrm{tr}P$ is a solution of the system of equations 
\eqref{ContEq}-\eqref{EulEq}-\eqref{EnergEq}, i.e. the Euler system of gas dynamics for a perfect gas with adiabatic index $\gamma=5/3$, as explained at the end of section \ref{sec:LocConsLaws}. This
follows from the approach of the hydrodynamic limit of the Boltzmann equation involving Hilbert's expansion, as in \cite{CaflischCPAM80} or chapter 11 of \cite{CIP} --- see also \cite{BardosFG}.

In the case of dense gases, the situation is completely different. The collision integral \textit{directly} contributes to the stress tensor and energy flux in the local balance of momentum and energy in the gas, 
and this contribution is easily computed in terms of the momentum and energy currents in Theorem \ref{T-Currents}. See for instance \cite{Maynar,Takata} for other approaches to this issue.

\begin{corollary}\label{C-LocCons}
Let $f$ be a classical solution of the kinetic equation \eqref{GenKinMod} with delocalized collision integral \eqref{GenCollInt}. Assume that the collision kernel $B$ in \eqref{GenCollInt} satisfies (i)-(iv),
and that $(x,v)\mapsto f(t,x,v)$ is continuous on $\mathbf R^3\times\mathbf R^3$ and rapidly decaying at infinity. Then
\[
\left\{\begin{aligned}
{}&\partial_t\rho_f(t,x)+\nabla_x\cdot\int_{\mathbf R^3}vf(t,x,v)dv=0\,,
\\
{}&\partial_t\int_{\mathbf R^3}v_if(t,x,v)dv\!+\!\nabla_x\!\cdot\!\int_{\mathbf R^3}\left(vv_if(t,x,v)\!-\!\mathcal I_i[f](t,x,v)\right)dv=0\,,\qquad i=1,2,3\,,
\\
{}&\partial_t\int_{\mathbf R^3}\tfrac12|v|^2f(t,x,v)dv\!+\!\nabla_x\!\cdot\!\int_{\mathbf R^3}\left(v\tfrac12|v|^2f(t,x,v)-\tfrac12\mathcal I_4[f](t,x,v)\right)dv=0\,,
\end{aligned}
\right.
\]
where $\mathcal I_i[f]$ and $\mathcal I_4[f]$ are respectively the projections of the momentum and energy phase-space currents involved in \eqref{DivMomCurrents} and \eqref{DivEnergyCurrent} on the 
space of positions.
\end{corollary}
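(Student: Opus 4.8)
The plan is to take velocity moments of the kinetic equation \eqref{GenKinMod} against the collision invariants $1$, $v_i$ and $\tfrac12|v|^2$, and to feed the resulting right-hand sides into the conservative representations of Theorem \ref{T-Currents} and the global mass identity of Proposition \ref{P-GlobCons}. For the mass balance I would integrate \eqref{GenKinMod} in $v$ over $\mathbf R^3$. Because $(x,v)\mapsto f(t,x,v)$ is rapidly decaying, one may commute $\partial_t$ and $\nabla_x$ with integration in $v$, so the left-hand side becomes $\partial_t\rho_f+\nabla_x\cdot\int_{\mathbf R^3}vf\,dv$; the right-hand side $\int_{\mathbf R^3}\mathrm{St}[f,f]\,dv$ vanishes by the local mass conservation identity of Proposition \ref{P-GlobCons}. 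This gives the first equation.

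For the momentum and energy balances I would multiply \eqref{GenKinMod} by $v_i$ (resp. by $\tfrac12|v|^2$) and integrate in $v$. The left-hand sides become $\partial_t\int_{\mathbf R^3}v_if\,dv+\nabla_x\cdot\int_{\mathbf R^3}vv_if\,dv$ (resp. $\partial_t\int_{\mathbf R^3}\tfrac12|v|^2f\,dv+\nabla_x\cdot\int_{\mathbf R^3}v\tfrac12|v|^2f\,dv$). On the right-hand sides I invoke \eqref{DivMomCurrents}, namely $\mathrm{St}[f,f]v_i=\nabla_x\cdot\mathcal I_i[f]+\nabla_v\cdot\mathcal J_i[f]$, and \eqref{DivEnergyCurrent}, namely $\tfrac12\mathrm{St}[f,f]|v|^2=\tfrac12\nabla_x\cdot\mathcal I_4[f]+\tfrac12\nabla_v\cdot\mathcal J_4[f]$. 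Integrating in $v$, the velocity-divergence terms $\int_{\mathbf R^3}\nabla_v\cdot\mathcal J_i[f]\,dv$ and $\int_{\mathbf R^3}\nabla_v\cdot\mathcal J_4[f]\,dv$ vanish, so the right-hand sides reduce to $\nabla_x\cdot\int_{\mathbf R^3}\mathcal I_i[f]\,dv$ and $\nabla_x\cdot\int_{\mathbf R^3}\tfrac12\mathcal I_4[f]\,dv$ respectively. Moving these terms to the left and rearranging produces the second and third equations.

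The substantive content is entirely contained in Theorem \ref{T-Currents} and Proposition \ref{P-GlobCons}; the corollary is then a moment computation. The only point requiring genuine care --- and hence the main (modest) obstacle --- is to justify the two manipulations used throughout: commuting the $t$- and $x$-derivatives with integration in $v$, and discarding the integrals of the divergence-in-$v$ currents. Both rest on the rapid decay of $f$, which furnishes the dominated-convergence bounds needed to differentiate under the integral sign and the decay at infinity in $v$ needed to annihilate $\nabla_v\cdot\mathcal J_i[f]$ and $\nabla_v\cdot\mathcal J_4[f]$ via the divergence theorem. The one place where the structure of the currents must be inspected rather than merely quoted is the verification, from the explicit formulas in Theorem \ref{T-Currents}, that each $\mathcal J_i[f](t,x,\cdot)$ and $\mathcal J_4[f](t,x,\cdot)$ inherits rapid decay in $v$ from $f$, so that no boundary term survives at infinity in the velocity variable.
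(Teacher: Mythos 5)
Your proposal is correct and follows exactly the route the paper intends: the corollary is obtained by taking the $1$, $v_i$, $\tfrac12|v|^2$ moments of \eqref{GenKinMod}, using the local mass identity of Proposition \ref{P-GlobCons} and the representations \eqref{DivMomCurrents}, \eqref{DivEnergyCurrent} of Theorem \ref{T-Currents}, with the $v$-divergence terms integrating to zero by rapid decay. Your closing remark about justifying test functions of the form $(x,v)\mapsto\psi(x)$ (equivalently, the decay in $v$ of the currents $\mathcal J_i[f]$, $\mathcal J_4[f]$) is precisely the rigor point the paper itself flags in the analogous Corollary \ref{C-LocEntrIneq}.
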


With the definitions \eqref{Def-rho} for $\rho_f$ and
\[
u_f(t,x):=\frac{\mathbf 1_{\rho_f(t,x)>0}}{\rho_f(t,x)}\int_{\mathbf R^3}vf(t,x,v)dv\,,
\]
and
\[
P_f(t,x):=\int_{\mathbf R^3}(v-u_f(t,x))^{\otimes 2}f(t,x,v)dv\,,
\]
while
\[
Q_f(t,x):=\tfrac12\int_{\mathbf R^3}(v-u_f(t,x))|v-u_f(t,x)|^2f(t,x,v)dv\,,
\]
as in section \ref{sec:LocConsLaws}, the identities obtained in the corollary above are recast as
\begin{equation}\label{LocConsLawsDeloc}
\left\{
\begin{aligned}
{}&\partial_t\rho_f+\nabla_x\cdot(\rho_fu_f)=0\,,
\\
&\partial_t(\rho_fu_f)+\nabla_x\cdot(\rho_f u_f^{\otimes 2}+P_f-\mathbb I[f])=0\,,
\\
&\partial_t(\tfrac12(\rho_f|u_f|^2+\mathrm{tr}P_f)+\nabla_x\cdot(\tfrac12(\rho_f|u_f|^2+\mathrm{tr}P_f)u_f+P_f\cdot u_f+Q_f-\mathcal I_4[f])=0\,,
\end{aligned}
\right.
\end{equation}
where $\mathbb I[f]$ is the square matrix whose rows are $\mathcal I_i[f]\in\mathbf R^3$, for $i=1,2,3$, i.e.
\[
\mathbb I[f]:=\left[\;\begin{matrix}{}\mathcal I_1[f]\\ \mathcal I_2[f]\\ \mathcal I_3[f]\end{matrix}\;\right]\,.
\]
Obviously the system \eqref{LocConsLawsDeloc} requires some closure relations --- i.e. expressions of $\mathbb I[f]$, of $Q_f$ and of $\mathcal I_4[f]$ in terms of $\rho_f,u_f$ and $P_f$ --- before it can 
be considered as a bona fide system of equations for the dynamics of dense gases.

\subsection{Local Entropy Inequality}


It is a well-known fact that systems of conservation laws such as \eqref{LocConsLawsDeloc} with some appropriate closure assumptions may have infinitely many weak solutions corresponding to the same 
initial data: see for instance section 4.4 in chapter IV of \cite{Dafermos}. The example of loss of uniqueness given in section 4.4 in chapter IV of \cite{Dafermos} involves unphysical shock waves, which 
dissipate, instead of producing entropy. For that reason, systems of conservation laws often come with a Friedrichs-Lax entropy inequality used to qualify weak solutions of such systems: see section 4.5
and chapter V in \cite{Dafermos}. (Unfortunately this is in general not enough to guarantee uniqueness in space dimension $2$ and higher: see \cite{DeLellisSzek}.)

However, these considerations suggest seeking an entropy inequality for \eqref{LocConsLawsDeloc}. By analogy with \cite{BardosFG}, we may think of using the local variant of the H Theorem reported
in Theorem \eqref{T-HThmLoc} to do so. 

\begin{corollary}\label{C-LocEntrIneq}
Let $f$ be a classical solution of \eqref{GenKinMod} with delocalized collision integral \eqref{GenCollInt} and collision kernel of the form \eqref{CollKernelHThm}. Assume that $(x,v)\mapsto f(t,x,v)$ is rapidly decaying at infinity, while $(x,v)\mapsto\ln f(t,x,v)$ has at most polynomial growth at infinity. Then, the local conservation laws \eqref{LocConsLawsDeloc} are accompanied with the entropy inequality
\[
\begin{aligned}
\partial_t\left(\int_{\mathbf R^3}f\ln f(t,x,v)dv+\tfrac12(\rho_f(t,\cdot)\star\beta(|\cdot|))(x)\rho_f(t,x)\right)
\\
+\nabla_x\cdot\left(\int_{\mathbf R^3}\left(f\ln f+\mathcal K[f]\right)(t,x,v)dv+\tfrac12(\rho_f(t,\cdot)\star\beta(|\cdot|))(x)\rho_f(t,x)u_f(t,x)\right)&\le 0
\end{aligned}
\]
where 
\[
\rho_f(t,x):=\int_{\mathbf R^3}f(t,x,v)dv\ge 0\,,\qquad u(t,x)=\frac{\mathbf 1_{\rho(t,x)>0}}{\rho(t,x)}\int_{\mathbf R^3}vf(t,x,v)dv\in\mathbf R^3\,,
\]
while the functional $\mathcal K$ is defined in Theorem \ref{T-HThmLoc}.
\end{corollary}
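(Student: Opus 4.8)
The plan is to obtain the inequality as a direct consequence of the local H Theorem (Theorem \ref{T-HThmLoc}), by multiplying the kinetic equation by $1+\ln f$ and integrating in $v$, in exact parallel with the passage from the Boltzmann collision identity \eqref{ConsLawBoltzCollInt} to the fluid balance \eqref{ConsLawBoltz}. This is genuinely a corollary: the real content lives in Theorem \ref{T-HThmLoc}, and here one only has to integrate it against $dv$ and reorganize.

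First I would establish the entropy transport identity. Multiplying \eqref{GenKinMod} by $1+\ln f$ and using $(1+\ln f)(\partial_t+v\cdot\nabla_x)f=(\partial_t+v\cdot\nabla_x)(f\ln f)$ gives $(\partial_t+v\cdot\nabla_x)(f\ln f)=\mathrm{St}[f,f]\ln f+\mathrm{St}[f,f]$. Integrating in $v$ over $\mathbf R^3$ and invoking the local mass conservation of Proposition \ref{P-GlobCons}, namely $\int_{\mathbf R^3}\mathrm{St}[f,f]\,dv=0$, kills the last term, leaving $\partial_t\int f\ln f\,dv+\nabla_x\cdot\int vf\ln f\,dv=\int_{\mathbf R^3}\mathrm{St}[f,f]\ln f\,dv$.

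Next I would insert the pointwise bound of Theorem \ref{T-HThmLoc} for $\mathrm{St}[f,f]\ln f$ and integrate it in $v$. Three reductions finish the argument: (a) the velocity-divergence term drops out, since $\int_{\mathbf R^3}\nabla_v\cdot\mathcal L[f]\,dv=0$ by the rapid decay of $\mathcal L[f]$ in $v$; (b) the position-divergence term gives $\nabla_x\cdot\int_{\mathbf R^3}\mathcal K[f]\,dv$; and (c) because $\Phi(t,x):=\rho_f(t,\cdot)\star\beta(|\cdot|)(x)$ from \eqref{DefConvol} is independent of $v$, one has $\int\partial_t(f\Phi)\,dv=\partial_t(\rho_f\Phi)$ and $\int v\cdot\nabla_x(f\Phi)\,dv=\nabla_x\cdot(\Phi\int vf\,dv)=\nabla_x\cdot(\Phi\rho_f u_f)$. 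Combining these with the transport identity and moving the $\tfrac12\Phi\rho_f$ and $\tfrac12\Phi\rho_f u_f$ contributions to the left-hand side yields exactly the asserted inequality, with entropy density $\int f\ln f\,dv+\tfrac12\Phi\rho_f$ and entropy flux $\int(vf\ln f+\mathcal K[f])\,dv+\tfrac12\Phi\rho_f u_f$.

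The main obstacle is not conceptual but rests entirely on justifying these manipulations. Concretely, one must legitimize differentiation under the integral sign in $t$ and $x$, the vanishing of $\int\nabla_v\cdot\mathcal L[f]\,dv$, and the interchanges in (c); all of these follow from the standing hypotheses that $(x,v)\mapsto f$ is rapidly decaying while $\ln f$ grows at most polynomially, so that $f\ln f$, $vf\ln f$, and the currents $\mathcal K[f],\mathcal L[f]$ are integrable with the decay needed to discard boundary terms. A secondary subtlety is that Theorem \ref{T-HThmLoc} is a \emph{distributional} inequality: integrating it in $v$ is to be read as testing against nonnegative functions of $(t,x)$ alone, after which the decay in $v$ licenses the $v$-integration and recovers the stated local inequality. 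I note finally that dimensional consistency of the flux under $\nabla_x\cdot$ forces its first integrand to be $vf\ln f$ rather than $f\ln f$, and I read the statement with this correction.
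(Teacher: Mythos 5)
Your proposal is correct and follows exactly the paper's own route: the corollary is obtained by integrating the distributional inequality of Theorem \ref{T-HThmLoc} in $v$ (after extending it to test functions depending on $(t,x)$ alone), combined with the entropy transport identity coming from multiplying \eqref{GenKinMod} by $1+\ln f$ and using the local mass conservation of Proposition \ref{P-GlobCons} --- a step the paper leaves implicit and you rightly make explicit. Your reading of the flux integrand as $vf\ln f$ rather than $f\ln f$ is also the correct interpretation of what is evidently a typographical slip in the statement.
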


This is a straightforward consequence of Theorem \ref{T-HThmLoc}, obtained by integrating in $v\in\mathbf R^3$ both sides of the inequality in that theorem. To be rigorous, one should observe that the
inequality in Theorem \ref{T-HThmLoc}, which is stated in the sense of distributions, holds true when applied to a more general class of test functions, including $(x,v)\mapsto\psi(x)$ where $\psi$ belongs
to $C^1_c(\mathbf R^3)$: see the beginning of the proof of Theorem \ref{T-HThmLoc}.

The differential inequality obtained in Corollary \ref{C-LocEntrIneq} is not exactly an entropy inequality in the sense of Friedrichs and Lax, since the entropy
\[
\eta(t,x):=\int_{\mathbf R^3}f\ln f(t,x,v)dv+\tfrac12(\rho_f(t,\cdot)\star\beta(|\cdot|))(x)\rho_f(t,x)\,,
\]
even after some closure assumption specifying the $v$-dependence in $f$ (e.g. a local Maxwellian distribution), is not a \textit{local} function of the conserved densities in \eqref{LocConsLawsDeloc},
since it involves the term $\tfrac12(\rho_f(t,\cdot)\star\beta(|\cdot|))(x)\rho_f(t,x)$ which is nonlocal in $\rho_f$. 

Therefore, the usual theory of Friedrichs-Lax extensions \cite{FriedLax} of systems of conservation laws, and in particular the fact the convexity of $\eta$ as a \textit{local} function of the conserved density 
implies the hyperbolicity of the system of conservation laws, which can then be put in Godunov's form \cite{Godunov}, is unfortunately not available for us to handle the system \eqref{LocConsLawsDeloc}.
Nevertheless, even if one cannot apply directly the Godunov structure and the Friedrichs-Lax theory, the local entropy inequality reported in Corollary \ref{C-LocEntrIneq} may be of independent 
interest for studying \eqref{LocConsLawsDeloc}. For instance, it could be interesting to connect the discussion of the fluid dynamic limit of the Enskog-Boltzmann equation in \cite{Lachowicz} with
the results presented in this paper. We hope to return to this question in more detail in the future.


\section{Proof of Theorem \ref{T-Currents}}\label{sec:ProofTCurrents}


The proof of Theorem \ref{T-Currents} involves rather intricate computations, and will be split in several steps.

\subsection{Weak Forms of the Collision Integral}\label{ssec:WFormCollInt}

A first step in the representation of delocalized collision integrals in terms of divergences in $(x,v)$ of momentum and energy currents is the weak formulation of these collision integrals.

\begin{lemma}\label{L-WkFormStfg}
Let $f,g\in C(\mathbf R^3\times\mathbf R^3)$, rapidly decaying at infinity. For each test function $\phi\in C(\mathbf R^3\times\mathbf R^3)$ with (at most) polynomial growth at infinity,
\[
\begin{aligned}
\int_{(\mathbf R^3)^2}\!\mathrm{St}[f,g](x,v)\phi(x,v)dxdv\!=\!\!\int_{(\mathbf R^3)^4}\!(\phi(x,v'(v,w,n_{x,y}))\!-\!\phi(x,v))f(x,v)g(y,w)
\\
\times \mathbf 1_{(v-w|n_{x,y})<0}B[\rho_f,\rho_g](x,y,v-w)dxdvdydw&\,.
\end{aligned}
\]
\end{lemma}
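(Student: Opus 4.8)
The plan is to substitute the definition \eqref{GenCollInt} of $\mathrm{St}[f,g]$ into the left-hand side, integrate in $(x,v)$, and split the resulting integral over $(\mathbf R^3)^4$ into its gain and loss contributions. The loss part is already in the desired form: it reads
\[
-\int_{(\mathbf R^3)^4}\phi(x,v)f(x,v)g(y,w)\mathbf 1_{(v-w|n_{x,y})<0}B[\rho_f,\rho_g](x,y,v-w)\,dxdvdydw\,,
\]
and accounts for the $-\phi(x,v)$ term on the right-hand side. Everything therefore reduces to rewriting the gain part
\[
\int_{(\mathbf R^3)^4}\phi(x,v)f(x,v'(v,w,n_{x,y}))g(y,w'(v,w,n_{x,y}))\mathbf 1_{(v-w|n_{x,y})>0}B[\rho_f,\rho_g](x,y,v-w)\,dxdvdydw
\]
so as to produce the factor $\phi(x,v'(v,w,n_{x,y}))$.

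For fixed $(x,y)$ with $x\neq y$ (the diagonal $\{x=y\}$ being Lebesgue-negligible, and $n_{x,y}$ being then well defined), I would perform in the two inner velocity integrals the change of variables $(v,w)\mapsto(V,W):=(v'(v,w,n_{x,y}),w'(v,w,n_{x,y}))$. By \eqref{CollTransfo} this map is linear in $(v,w)$, and a direct computation shows it is an involution: applying $v',w'$ twice restores $(v,w)$, using that $(v'-w'|n_{x,y})=-(v-w|n_{x,y})$. Being a linear involution it equals its own inverse, so its Jacobian determinant $J$ satisfies $J^2=1$, whence $|J|=1$ and the change of variables preserves Lebesgue measure. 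Three facts then transform the integrand. First, since the map is involutive, $(v,w)=(v'(V,W,n_{x,y}),w'(V,W,n_{x,y}))$, so that $f(x,v'(v,w,n_{x,y}))g(y,w'(v,w,n_{x,y}))=f(x,V)g(y,W)$ while $\phi(x,v)=\phi(x,v'(V,W,n_{x,y}))$. Second, $(V-W|n_{x,y})=-(v-w|n_{x,y})$, so $\mathbf 1_{(v-w|n_{x,y})>0}=\mathbf 1_{(V-W|n_{x,y})<0}$. Third, assumption (ii) on $B$ gives $B[\rho_f,\rho_g](x,y,v-w)=B[\rho_f,\rho_g](x,y,V-W)$. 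Renaming $(V,W)$ back to $(v,w)$ therefore turns the gain part into
\[
\int_{(\mathbf R^3)^4}\phi(x,v'(v,w,n_{x,y}))f(x,v)g(y,w)\mathbf 1_{(v-w|n_{x,y})<0}B[\rho_f,\rho_g](x,y,v-w)\,dxdvdydw\,,
\]
and adding the loss part yields exactly the claimed identity.

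The only genuine point requiring care is to carry out these manipulations at the level of absolutely convergent integrals, so that Fubini's theorem and the change of variables are legitimate rather than merely formal. Here the hypotheses are designed for this: $f,g$ are rapidly decaying, $\phi$ has at most polynomial growth, and bound (iv) controls $\int B\,\mathbf 1_{|x|\le A}\,dxdy$ by a constant times $1+|v-w|$; combining these bounds shows that the integrand is absolutely integrable over $\{|x|\le A\}\times(\mathbf R^3)^3$ for every $A$, and the rapid decay of $f$ in $x$ disposes of the remaining integration. I expect the bookkeeping of these estimates --- in particular checking that $\phi(x,v'(v,w,n_{x,y}))$, which grows only polynomially in $(v,w)$, remains dominated by the rapid decay of $f(x,v)g(y,w)$ --- to be the only, and essentially routine, technical obstacle; the algebraic heart of the argument is merely the measure-preserving involution property of the collision map together with assumption (ii) on $B$. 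When $B$ is a distribution in $|x-y|$, as for the Enskog kernel, the same reasoning applies verbatim, since the change of variables acts only on $(v,w)$ and leaves the $|x-y|$-dependence of $B$ untouched.
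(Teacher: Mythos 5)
Your proposal is correct and follows essentially the same route as the paper's proof: split off the loss term, then transform the gain term by the change of variables $(v,w)\mapsto(v'(v,w,n_{x,y}),w'(v,w,n_{x,y}))$ for fixed $(x,y)$, using the involutive (hence measure-preserving) character of the collision map, the sign flip $(v'-w'|n_{x,y})=-(v-w|n_{x,y})$, and assumption (ii) on $B$. The additional care you take with the Jacobian, absolute convergence via assumption (iv) and the rapid decay of $f,g$, and the distributional kernels is sound and only amplifies what the paper leaves implicit.
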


\begin{proof}
By definition of the collision integral $\mathrm{St}[f,g]$,
\[
\begin{aligned}
\int_{(\mathbf R^3)^2}\mathrm{St}[f,g](x,v)\phi(x,v)dxdv
\\
=\int_{(\mathbf R^3)^4}B[\rho_f,\rho_g](x,y,v-w)(f(x,v'(v,w,n_{x,y}))g(y,w'(v,w,n_{x,y}))\mathbf 1_{(v-w|n_{x,y})>0}
\\
-f(x,v)g(y,w)\mathbf 1_{(v-w|n_{x,y})<0})\phi(x,v)dxdvdydw&\,.
\end{aligned}
\]
The gain term (positive part) of this collision integral satisfies the following symmetries:
\[
\begin{aligned}
\int_{(\mathbf R^3)^4}B[\rho_f,\rho_g](x,y,v-w)f(x,v'(v,w,n_{x,y}))g(y,w'(v,w,n_{x,y}))
\\
\times\phi(x,v)\mathbf 1_{(v-w|n_{x,y})>0}dxdvdydw
\\
=\int_{(\mathbf R^3)^4}B[\rho_f,\rho_g](x,y,v'(v,w,n_{x,y})-w'(v,w,n_{x,y}))f(x,v)g(y,w)
\\
\times\phi(x,v'(v,w,n_{x,y}))\mathbf 1_{(v'(v,w,n)-w'(v,w,n)|n_{x,y})>0}dxdvdydw
\\
\text{(integrating by substitution with $(v,w)\mapsto(v'(v,w,n_{x,y}),w'(v,w,n_{x,y}))$)}
\\
=\!\!\int_{(\mathbf R^3)^4}\!\!B[\rho_f,\rho_g](x,y,v\!-\!w)f(x,v)g(y,v)\phi(x,v'(v,w,n_{x,y}))\mathbf 1_{(v\!-\!w|n_{x,y})<0}dxdvdydw
\\
\text{(by (ii) and the equality $(v'(v,w,n_{x,y})-w'(v,w,n_{x,y})|n_{x,y})=-(v-w|n_{x,y})$).}
\end{aligned}
\]
Inserting this form of the gain term in the identity above implies the announced identity.
\end{proof}

\smallskip
The preceding lemma gives a weak formulation of $\mathrm{St}[f,g]$, where $f$ and $g$ may be different. In particular, the proof of this lemma does not involve property (i)
of the collision kernel $B$. The next lemma gives a weak formulation of $\mathrm{St}[f,f]$. Since $f=g$ in this case, property (i) of $B$ is a key ingredient in the proof.

\begin{lemma}\label{L-WeakFormStff}
Let $f\in C(\mathbf R^3\times\mathbf R^3)$ be rapidly decaying at infinity. For each test function $\phi\in C(\mathbf R^3\times\mathbf R^3)$ with polynomial growth at infinity,
\[
\begin{aligned}
\int_{(\mathbf R^3)^2}\mathrm{St}[f,f](x,v)\phi(x,v)dxdv
\\
=\tfrac12\int_{(\mathbf R^3)^4}(\phi(x,v'(v,w,n_{x,y}))+\phi(y,w'(v,w,n_{x,y}))-\phi(x,v)-\phi(y,w))
\\
\times f(x,v)f(y,w)B[\rho_f,\rho_f](x,y,v-w)\mathbf 1_{(v-w|n_{x,y})<0}dxdvdydw&\,.
\end{aligned}
\]
\end{lemma}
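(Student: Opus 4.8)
The plan is to reduce everything to the asymmetric weak form already established in Lemma \ref{L-WkFormStfg} and then to symmetrize by relabeling the two collision partners. This is the same swap-and-average strategy used in the momentum and energy computations in the proof of Proposition \ref{P-GlobCons}.

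First I would specialize Lemma \ref{L-WkFormStfg} to $g=f$, which expresses $\int_{(\mathbf R^3)^2}\mathrm{St}[f,f](x,v)\phi(x,v)dxdv$ as the integral over $(\mathbf R^3)^4$ of
\[
(\phi(x,v'(v,w,n_{x,y}))-\phi(x,v))f(x,v)f(y,w)\mathbf 1_{(v-w|n_{x,y})<0}B[\rho_f,\rho_f](x,y,v-w)\,;
\]
call this quantity $A$. Note that $A$ treats the two particles asymmetrically: the test function is evaluated only at $(x,v)$ and at its image $v'$, never at the partner $(y,w)$.

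Next I would produce a second expression for the \emph{same} number $A$ by exchanging the dummy integration variables $(x,v)\leftrightarrow(y,w)$ throughout. This turns the integrand into $(\phi(y,v'(w,v,n_{y,x}))-\phi(y,w))f(x,v)f(y,w)\mathbf 1_{(w-v|n_{y,x})<0}B[\rho_f,\rho_f](y,x,w-v)$, which I then simplify using three elementary facts: since $n_{y,x}=-n_{x,y}$ one has $(w-v|n_{y,x})=(v-w|n_{x,y})$, so the indicator is unchanged; the collision map obeys $v'(w,v,n)=w'(v,w,n)$ and is invariant under $n\mapsto-n$, whence $v'(w,v,n_{y,x})=w'(v,w,n_{x,y})$; and property (i) of the kernel gives $B[\rho_f,\rho_f](y,x,w-v)=B[\rho_f,\rho_f](x,y,v-w)$. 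After these identifications the second expression equals $A$ but with the factor $\phi(x,v'(v,w,n_{x,y}))-\phi(x,v)$ replaced by $\phi(y,w'(v,w,n_{x,y}))-\phi(y,w)$, the common weight $f(x,v)f(y,w)\mathbf 1_{(v-w|n_{x,y})<0}B[\rho_f,\rho_f](x,y,v-w)$ being literally identical in both.

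Finally, since both integrals equal $A$, so does their arithmetic mean, and averaging produces exactly the four-term combination $\phi(x,v'(v,w,n_{x,y}))+\phi(y,w'(v,w,n_{x,y}))-\phi(x,v)-\phi(y,w)$ claimed in the statement. The only genuinely new ingredient beyond Lemma \ref{L-WkFormStfg} is the use of property (i) of $B$ in the step $B[\rho_f,\rho_f](y,x,w-v)=B[\rho_f,\rho_f](x,y,v-w)$ --- this is precisely why $f=g$ is needed here, as the remark preceding the lemma stresses --- together with the bookkeeping of the sign flip $n_{y,x}=-n_{x,y}$ and its consequences for $v'$ and $w'$. I expect this sign and kinematics tracking, rather than any conceptual difficulty, to be the only place where care is required.
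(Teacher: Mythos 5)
Your proposal is correct and follows essentially the same route as the paper's own proof: specialize Lemma \ref{L-WkFormStfg} to $g=f$, exchange the dummy variables $(x,v)\leftrightarrow(y,w)$ using property (i) of $B$ together with $v'(w,v,n)=w'(v,w,n)$, the invariance under $n\mapsto-n$, and the sign identity $(w-v|n_{y,x})=(v-w|n_{x,y})$, then average the two equal expressions. All the kinematic identities you invoke are verified exactly as in the paper, so nothing is missing.
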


\begin{proof}
Starting from the identity in the preceding lemma
\begin{equation}\label{Ident1}
\begin{aligned}
\int_{(\mathbf R^3)^2}\!\mathrm{St}[f,f](x,v)\phi(x,v)dxdv\!=\!\int_{(\mathbf R^3)^4}(\phi(x,v'(v,w,n_{x,y}))\!-\!\phi(x,v))f(x,v)f(y,w)
\\
\times B[\rho_f,\rho_f](x,y,v\!-\!w)\mathbf 1_{(v-w|n_{x,y})<0}dxdvdydw&\,.
\end{aligned}
\end{equation}
Integrating by substitution in the right-hand side with $(x,v,y,w)\mapsto(y,w,x,v)$, 
\begin{equation}\label{Ident2}
\begin{aligned}
\int_{(\mathbf R^3)^2}\mathrm{St}[f,f](x,v)\phi(x,v)dxdv
\\
=\int_{(\mathbf R^3)^4}B[\rho_f,\rho_f](y,x,w-v)f(y,w)f(x,v)(\phi(y,v'(w,v,n_{y,x}))-\phi(y,w))
\\
\times\mathbf 1_{(w-v|n_{y,x})<0}dxdvdydwdn
\\
=\int_{(\mathbf R^3)^4}B[\rho_f,\rho_f](x,y,v-w)f(x,v)f(y,w)(\phi(y,w'(v,w,n_{x,y}))-\phi(y,w))
\\
\times \mathbf 1_{(v-w|n_{x,y})<0}dxdvdydw&\,,
\end{aligned}
\end{equation}
where the last equality follows from (i), while
\[
v'(w,v,-n)=v'(w,v,n)=w'(v,w,n)\,.
\]
Since the last right-hand sides of \eqref{Ident1} and \eqref{Ident2} are equal, they are both equal to their average, i.e.
\[
\begin{aligned}
\int_{(\mathbf R^3)^2}\mathrm{St}[f,f]\phi(x,v)dxdv=\int_{(\mathbf R^3)^4}\tfrac{\phi(x,v'(v,w,n_{x,y}))-\phi(x,v)+\phi(y,w'(v,w,n_{x,y}))-\phi(y,w)}2
\\
\times f(x,v)f(y,w)B[\rho_f,\rho_f](x,y,v-w)\mathbf 1_{(v-w|n_{x,y})<0}dxdvdydw&\,,
\end{aligned}
\]
which is precisely the announced identity.
\end{proof}

Lemma \ref{L-WkFormStfg} will be used repeatedly in the proof of Theorem \ref{T-Currents}. Lemma \ref{L-WeakFormStff}, on the other hand, is used in the proofs of Theorems  \ref{T-HThmGlob} and \ref{T-HThmLoc}.

\subsection{Mass Current: Proof of \eqref{DivMassCurrent}}

Observe that
\[
\begin{aligned}
\phi(x,v'(v,w,n))-\phi(x,v)=&\int_0^{(v-w|n)}\frac{d}{ds}\phi(x,v-sn)ds
\\
=&-\int_0^{(v-w|n)}n\cdot\nabla_v\phi(x,v-sn)ds\,.
\end{aligned}
\]

\[
\begin{aligned}
\int_{(\mathbf R^3)^2}\!\mathrm{St}[f,g](x,v)\phi(x,v)dxdv\!=\!\!\int_{(\mathbf R^3)^4}\!(\phi(x,v'(v,w,n_{x,y}))\!-\!\phi(x,v))f(x,v)g(y,w)
\\
\times \mathbf 1_{(v-w|n_{x,y})<0}B[\rho_f,\rho_g](x,y,v-w)dxdvdydw
\end{aligned}
\]

Inserting this expression in the right-hand side of the weak formulation of the collision integral $\mathrm{St}[f,g]$ in Lemma \ref{L-WkFormStfg} shows that
\[
\begin{aligned}
\int_{(\mathbf R^3)^2}\mathrm{St}[f,g](x,v)\phi(x,v)dxdv\!=\!-\int_{(\mathbf R^3)^4}\!B[\rho_f,\rho_g](x,y,v-w)f(x,v)g(y,w)
\\
\times\left(\int_0^{(v-w|n_{y,x})}n_{y,x}\cdot\nabla_v\phi(x,v-sn_{y,x})ds\right)\mathbf 1_{(v-w|n_{y,x})>0}dxdvdydw
\\
=-\int_{(\mathbf R^3)^4\times\mathbf R}B[\rho_f,\rho_g](x,y,v-w)\mathbf 1_{0<s<(v-w|n_{y,x})}f(x,v)g(y,w)
\\
\times n_{y,x}\cdot\nabla_v\phi(x,v-sn_{y,x})dxdvdydwds&\,.
\end{aligned}
\]
The last right-hand side above is recast as
\[
\begin{aligned}
\int_{(\mathbf R^3)^4\times\mathbf R}B[\rho_f,\rho_g](x,y,v-w)\mathbf 1_{0<s<(v-w|n_{y,x})}f(x,v)g(y,w)
\\
\times n_{y,x}\cdot\nabla_v\phi(x,v-sn_{y,x})dxdvdydwds
\\
=\int_{(\mathbf R^3)^4\times\mathbf R}B[\rho_f,\rho_g](x,y,\bar v-\bar w)f(x,\bar v+sn_{y,x})g(y,\bar w+sn_{y,x})
\\
\times\mathbf 1_{0<s<(\bar v-\bar w|n_{y,x})}n_{y,x}\cdot\nabla_v\phi(x,\bar v)dxd\bar vdyd\bar wds&\,,
\end{aligned}
\]
integrating by substitution with $\bar v:=v-sn_{y,x}$ and $\bar w:=w-sn_{y,x}$. Therefore
\[
\begin{aligned}
\int_{(\mathbf R^3)^2}\mathrm{St}[f,g](x,v)\phi(x,v)dxdv\!=\!-\!\int_{(\mathbf R^3)^3}\mathfrak J_0[\rho_f,\rho_g,f,g](x,y,\bar v)\cdot\nabla_v\phi(x,\bar v)dxdyd\bar v&\,,
\end{aligned}
\]
which is the formulation in the sense of distributions of the identity \eqref{DivMassCurrent}. \hfill{$\Box$}

\subsection{Momentum Currents: Proof of \eqref{DivMomCurrents}}

Applying Lemma \ref{L-WkFormStfg} to the test function $\phi(x,v)v_i$, we find that
\[
\begin{aligned}
\int_{(\mathbf R^3)^2}\!\mathrm{St}[f,g](x,v)\phi(x,v)v_idxdv\!=\!\!\int_{(\mathbf R^3)^4}\!(\phi(x,v'(v,w,n_{x,y}))v'_i(v,w,n_{x,y})\!-\!\phi(x,v)v_i)
\\
\times f(x,v)g(y,w)B[\rho_f,\rho_g](x,y,v-w)\mathbf 1_{(v-w|n_{x,y})<0}dxdvdydw&\,.
\end{aligned}
\]
Then we split the difference
\[
\begin{aligned}
\phi(x,v'(v,w,n_{x,y}))&v_i'(v,w,n_{x,y})-\phi(x,v)v_i
\\
=&(\phi(x,v'(v,w,n_{x,y}))-\phi(x,v))v_i'(v,w,n_{x,y})
\\
&+(v_i'(v,w,n_{x,y})-v_i)\phi(x,v)
\\
=&(\phi(x,v'(v,w,n_{x,y}))-\phi(x,v))v_i
\\
&-(\phi(x,v'(v,w,n_{x,y}))-\phi(x,v))(v|n_{x,y})(n_{x,y})_i
\\
&+(\phi(x,v'(v,w,n_{x,y}))-\phi(x,v))(w|n_{x,y})(n_{x,y})_i
\\
&-\phi(x,v)(v-w|n_{x,y})(n_{x,y})_i
\\
=&S_1-S_2+S_3-S_4\,.
\end{aligned}
\]
Thus
\[
\begin{aligned}
\int_{(\mathbf R^3)^4}B[\rho_f,\rho_f](x,y,v-w) \mathbf 1_{(v-w|n_{x,y})<0}f(x,v)f(y,w)S_1dxdvdydw
\\
=-\int_{(\mathbf R^3)^3}\mathfrak J_0[\rho_f,\rho_f,v_if,f](x,y,v)\cdot\nabla_v\phi(x,v)dxdydv&\,,
\end{aligned}
\]
while
\[
\begin{aligned}
\int_{(\mathbf R^3)^4}B[\rho_f,\rho_f](x,y,v-w) \mathbf 1_{(v-w|n_{x,y})<0}f(x,v)f(y,w)S_2dxdvdydw
\\
=-\int_{(\mathbf R^3)^3}\mathfrak J_0[\rho_f,\rho_f,(v|n_{x,y})(n_{x,y})_if,f](x,y,v)\cdot\nabla_v\phi(x,v)dxdydv&\,,
\end{aligned}
\]
and
\[
\begin{aligned}
\int_{(\mathbf R^3)^4}B[\rho_f,\rho_f](x,y,v-w) \mathbf 1_{(v-w|n_{x,y})<0}f(x,v)f(y,w)S_3dxdvdydw
\\
=-\int_{(\mathbf R^3)^3}\mathfrak J_0[\rho_f,\rho_f,f,(w|n_{x,y})(n_{x,y})_if](x,y,v)\cdot\nabla_v\phi(x,v)dxdydv&\,.
\end{aligned}
\]
We are left with the term
\[
\begin{aligned}
\int_{(\mathbf R^3)^4}B[\rho_f,\rho_f](x,y,v-w) \mathbf 1_{(v-w|n_{x,y})<0}f(x,v)f(y,w)S_4dxdvdydw
\\
=\int_{(\mathbf R^3)^4}B[\rho_f,\rho_f](x,y,v\!-\!w)(v\!-\!w|n_{y,x})_+(n_{y,x})_i\phi(x,v)f(x,v)f(y,w)dxdvdydw
\\
=\int_{(\mathbf R^3)^4}B[\rho_f,\rho_f](y,x,w\!-\!v)(v\!-\!w|n_{y,x})_+(n_{y,x})_i\phi(x,v)f(x,v)f(y,w)dxdvdydw&\,,
\end{aligned}
\]
where the last equality follows from assumption (i) on the collision kernel $B$. This last integral is transformed with the change of variables 
\[
(x,y,v,w)\mapsto(y,x,w,v)
\]
so that
\[
\begin{aligned}
\int_{(\mathbf R^3)^4}B[\rho_f,\rho_f](y,x,w\!-\!v)(v\!-\!w|n_{y,x})_+(n_{y,x})_i\phi(x,v)f(x,v)f(y,w)dxdvdydw
\\
=\int_{(\mathbf R^3)^4}B[\rho_f,\rho_f](x,y,v\!-\!w)(w\!-\!v|n_{x,y})_+(n_{x,y})_i\phi(y,w)f(x,v)f(y,w)dxdvdydw
\\
=-\int_{(\mathbf R^3)^4}B[\rho_f,\rho_f](x,y,v\!-\!w)(v\!-\!w|n_{y,x})_+(n_{y,x})_i\phi(y,w)f(x,v)f(y,w)dxdvdydw&\,.
\end{aligned}
\]
Thus
\[
\begin{aligned}
\int_{(\mathbf R^3)^4}B[\rho_f,\rho_f](x,y,v-w) \mathbf 1_{(v-w|n_{x,y})<0}f(x,v)f(y,w)S_4dxdvdydw
\\
=\int_{(\mathbf R^3)^4}B[\rho_f,\rho_f](x,y,v\!-\!w)(v\!-\!w|n_{y,x})_+(n_{y,x})_i\phi(x,v)f(x,v)f(y,w)dxdvdydw
\\
=-\int_{(\mathbf R^3)^4}B[\rho_f,\rho_f](x,y,v\!-\!w)(v\!-\!w|n_{y,x})_+(n_{y,x})_i\phi(y,w)f(x,v)f(y,w)dxdvdydw&\,,
\end{aligned}
\]
so that, eventually,
\[
\begin{aligned}
\int_{(\mathbf R^3)^4}B[\rho_f,\rho_f](x,y,v-w) \mathbf 1_{(v-w|n_{x,y})<0}f(x,v)f(y,w)S_4dxdvdydw
\\
=\tfrac12\int_{(\mathbf R^3)^4}B[\rho_f,\rho_f](x,y,v\!-\!w)(v\!-\!w|n_{y,x})_+(n_{y,x})_i(\phi(x,v)-\phi(y,w))
\\
\times f(x,v)f(y,w)dxdvdydw&\,.
\end{aligned}
\]

Set
\[
\left(\begin{matrix}x_\theta\\ y_\theta\end{matrix}\right):=\left(\begin{matrix}\cos\theta\,\,\,&\sin\theta\\ -\sin\theta&\cos\theta\end{matrix}\right)\left(\begin{matrix}x\\ y\end{matrix}\right)
\qquad\text{ and }\qquad
\left(\begin{matrix}v_\theta\\ w_\theta\end{matrix}\right):=\left(\begin{matrix}\cos\theta\,\,\,&\sin\theta\\ -\sin\theta&\cos\theta\end{matrix}\right)\left(\begin{matrix}v\\ w\end{matrix}\right)\,.
\]
We shall express the difference $\phi(x,v)-\phi(y,w)$ as follows:
\[
\phi(x,v)-\phi(y,w)=-\int_0^{\pi/2}\frac{d}{d\theta}\phi(x_\theta,v_\theta)d\theta\,.
\]
Since
\[
\frac{d}{d\theta}\phi(x_\theta,v_\theta)=y_\theta\cdot\nabla_x\phi(x_\theta,v_\theta)+w_\theta\cdot\nabla_v\phi(x_\theta,v_\theta)\,,
\]
we find that
\[
\begin{aligned}
\int_{(\mathbf R^3)^4}B[\rho_f,\rho_f](x,y,v-w) \mathbf 1_{(v-w|n_{x,y})<0}f(x,v)f(y,w)S_4dxdvdydw
\\
=-\tfrac12\int_{(\mathbf R^3)^4}\int_0^{\pi/2}B[\rho_f,\rho_f](x,y,v\!-\!w)(v\!-\!w|n_{y,x})_+(n_{y,x})_iy_\theta\cdot\nabla_x\phi(x_\theta,v_\theta)
\\
\times f(x,v)f(y,w)d\theta dxdvdydw
\\
-\tfrac12\int_{(\mathbf R^3)^4}\int_0^{\pi/2}B[\rho_f,\rho_f](x,y,v\!-\!w)(v\!-\!w|n_{y,x})_+(n_{y,x})_iw_\theta\cdot\nabla_v\phi(x_\theta,v_\theta)
\\
\times f(x,v)f(y,w)d\theta dxdvdydw&\,.
\end{aligned}
\]
Since the Jacobian determinants
\[
\det\frac{\partial(x_\theta,y_\theta)}{\partial(x,y)}=\det\frac{\partial(v_\theta,w_\theta)}{\partial(v,w)}=1\,,
\]
one can transforms both integrals in the right-hand side of the identity above into
\[
\begin{aligned}
-\int_{(\mathbf R^3)^4}B[\rho_f,\rho_f](x,y,v-w) \mathbf 1_{(v-w|n_{x,y})<0}f(x,v)f(y,w)S_4dxdvdydw
\\
=\tfrac12\int_{(\mathbf R^3)^4}\nabla_x\phi(x,v)\cdot y\int_0^{\pi/2}B[\rho_f,\rho_f](x_{-\theta},y_{-\theta},v_{-\theta}\!-\!w_{-\theta})(v_{-\theta}\!-\!w_{-\theta}|n_{y_{-\theta},x_{-\theta}})_+
\\
\times(n_{y_{-\theta},x_{-\theta}})_if(x_{-\theta},v_{-\theta})f(y_{-\theta},w_{-\theta})d\theta dxdvdydw
\\
+\tfrac12\int_{(\mathbf R^3)^4}\nabla_v\phi(x,v)\cdot w\int_0^{\pi/2}B[\rho_f,\rho_f](x_{-\theta},y_{-\theta},v_{-\theta}\!-\!w_{-\theta})(v_{-\theta}\!-\!w_{-\theta}|n_{y_{-\theta},x_{-\theta}})_+
\\
\times(n_{y_{-\theta},x_{-\theta}})_if(x_{-\theta},v_{-\theta})f(y_{-\theta},w_{-\theta})d\theta dxdvdydw&\,.
\end{aligned}
\]
Summarizing, we have found that
\[
\begin{aligned}
\int_{(\mathbf R^3)^2}\mathrm{St}[f,f](x,v)\phi(x,v)v_kdxdv=-\int_{(\mathbf R^3)^2}\mathcal I_i[f](x,v)\cdot\nabla_x\phi(x,v)dxdv
\\
-\int_{(\mathbf R^3)^2}\mathcal J_i[f](x,v)\cdot\nabla_v\phi(x,v)dxdv&\,,
\end{aligned}
\]
where
\[
\begin{aligned}
\mathcal I_i[f](x,v):=-\tfrac12\int_{(\mathbf R^3)^4}\int_0^{\pi/2}B[\rho_f,\rho_f](x_{-\theta},y_{-\theta},v_{-\theta}\!-\!w_{-\theta})(v_{-\theta}\!-\!w_{-\theta}|n_{y_{-\theta},x_{-\theta}})_+
\\
\times f(x_{-\theta},v_{-\theta})f(y_{-\theta},w_{-\theta})(n_{y_{-\theta},x_{-\theta}})_iyd\theta dydw&\,,
\end{aligned}
\]
while
\[
\begin{aligned}
\mathcal J_i[f](x,v):=\int_{\mathbf R^3}\mathfrak J_0[\rho_f,\rho_f,(v_i-(v|n_{x,y})(n_{x,y})_i)f,f](x,y,v)dy
\\
+\int_{\mathbf R^3}\mathfrak J_0[\rho_f,\rho_f,f,(w|n_{x,y})(n_{x,y})_if])(x,y,v)dy
\\
-\tfrac12\int_{(\mathbf R^3)^4}\int_0^{\pi/2}B[\rho_f,\rho_f](x_{-\theta},y_{-\theta},v_{-\theta}\!-\!w_{-\theta})(v_{-\theta}\!-\!w_{-\theta}|n_{y_{-\theta},x_{-\theta}})_+
\\
\times f(x_{-\theta},v_{-\theta})f(y_{-\theta},w_{-\theta})(n_{y_{-\theta},x_{-\theta}})_iwd\theta dydw&\,.
\end{aligned}
\]
This is the weak formulation (in the sense of distributions) of \eqref{DivMomCurrents}. \hfill$\Box$

\subsection{Energy Currents: Proof of \eqref{DivEnergyCurrent}}

Applying Lemma \ref{L-WkFormStfg} to the test function $\phi(x,v)|v|^2$, we find that
\[
\begin{aligned}
\int_{(\mathbf R^3)^2}\!\mathrm{St}[f,g](x,v)\phi(x,v)|v|^2dxdv
\\
=\int_{(\mathbf R^3)^4}\!(\phi(x,v'(v,w,n_{x,y}))|v'(v,w,n_{x,y})|^2\!-\!\phi(x,v)|v|^2)f(x,v)g(y,w)
\\
\times B[\rho_f,\rho_g](x,y,v-w)\mathbf 1_{(v-w|n_{x,y})<0}dxdvdydw
\end{aligned}
\]
Observe that
\[
v'(v,w,n)=(v-(v|n)n)+(w|n)n\in(\mathbf Rn)^\perp+\mathbf Rn\,.
\]
Hence
\[
|v'(v,w,n)|^2=|v-(v|n)n|^2+(w|n)^2=|v|^2-(v|n)^2+(w|n)^2\,.
\]
Then we split the difference
\[
\begin{aligned}
\phi(x,v'(v,w,n_{x,y}))&|v'(v,w,n_{x,y})|^2-\phi(x,v)|v|^2
\\
=&(\phi(x,v'(v,w,n_{x,y}))-\phi(x,v))|v'(v,w,n_{x,y})|^2
\\
&+(|v'(v,w,n_{x,y})|^2-|v|^2)\phi(x,v)
\\
=&(\phi(x,v'(v,w,n_{x,y}))-\phi(x,v))|v|^2
\\
&-(\phi(x,v'(v,w,n_{x,y}))-\phi(x,v))|(v|n_{x,y})|^2
\\
&+(\phi(x,v'(v,w,n_{x,y}))-\phi(x,v))|(w|n_{x,y})|^2
\\
&-\phi(x,v)((v|n_{x,y})^2-(w|n_{x,y})^2)
\\
=&S_1-S_2+S_3-S_4\,.
\end{aligned}
\]
Proceeding as in the case of the momentum current, we find that
\[
\begin{aligned}
\int_{(\mathbf R^3)^4}B[\rho_f,\rho_f](x,y,v-w) \mathbf 1_{(v-w|n_{x,y})<0}f(x,v)f(y,w)
\\
\times(S_1-S_2+S_3)dxdvdydw
\\
=-\int_{(\mathbf R^3)^3}\mathfrak J_0[\rho_f,\rho_f,(|v|^2-(v|n_{x,y})^2)f,f](x,y,v)\cdot\nabla_v\phi(x,v)dxdydv
\\
-\int_{(\mathbf R^3)^3}\mathfrak J_0[\rho_f,\rho_f,f,(w|n_{x,y})^2f](x,y,v)\cdot\nabla_v\phi(x,v)dxdydv&\,.
\end{aligned}
\]
Then we are left with the term
\[
\begin{aligned}
-\int_{(\mathbf R^3)^4}B[\rho_f,\rho_f](x,y,v-w) \mathbf 1_{(v-w|n_{x,y})<0}f(x,v)f(y,w)S_4dxdvdydw
\\
=\int_{(\mathbf R^3)^4}B[\rho_f,\rho_f](x,y,v-w) \mathbf 1_{(v-w|n_{x,y})<0}((v|n_{x,y})^2-(w|n_{x,y})^2)\phi(x,v)
\\
\times f(x,v)f(y,w)dxdvdydw
\\
=\int_{(\mathbf R^3)^4}B[\rho_f,\rho_f](x,y,v-w)(v-w|n_{y,x})_+(v+w|n_{y,x})\phi(x,v)
\\
\times f(x,v)f(y,w)dxdvdydw
\\
=-\int_{(\mathbf R^3)^4}B[\rho_f,\rho_f](y,x,w-v)(w-v|n_{x,y})_+(w+v|n_{x,y})\phi(x,v)
\\
\times f(x,v)f(y,w)dxdvdydw&\,.
\end{aligned}
\]
by assumption (i) on the collision kernel $B$, observing that $n_{y,x}=-n_{x,y}$. With the change of variables 
\[
(x,v,y,w)\mapsto(y,w,x,v)
\]
this last integral is transformed into
\[
\begin{aligned}
-\int_{(\mathbf R^3)^4}B[\rho_f,\rho_f](x,y,v-w) \mathbf 1_{(v-w|n_{x,y})<0}f(x,v)f(y,w)S_4dxdvdydw
\\
=-\int_{(\mathbf R^3)^4}B[\rho_f,\rho_f](x,y,v-w)(v-w|n_{y,x})_+(v+w|n_{y,x})\phi(y,w)
\\
\times f(x,v)f(y,w)dxdvdydw&\,.
\end{aligned}
\]
Therefore
\[
\begin{aligned}
-\int_{(\mathbf R^3)^4}B[\rho_f,\rho_f](x,y,v-w) \mathbf 1_{(v-w|n_{x,y})<0}f(x,v)f(y,w)S_4dxdvdydw
\\
=\tfrac12\int_{(\mathbf R^3)^4}B[\rho_f,\rho_f](x,y,v-w)(v-w|n_{y,x})_+(v+w|n_{y,x})(\phi(x,v)-\phi(y,w))
\\
\times f(x,v)f(y,w)dxdvdydw&\,.
\end{aligned}
\]
With the same notation as in the preceding section, i.e.
\[
\left(\begin{matrix}x_\theta\\ y_\theta\end{matrix}\right):=\left(\begin{matrix}\cos\theta\,\,\,&\sin\theta\\ -\sin\theta&\cos\theta\end{matrix}\right)\left(\begin{matrix}x\\ y\end{matrix}\right)
\qquad\text{ and }\qquad
\left(\begin{matrix}v_\theta\\ w_\theta\end{matrix}\right):=\left(\begin{matrix}\cos\theta\,\,\,&\sin\theta\\ -\sin\theta&\cos\theta\end{matrix}\right)\left(\begin{matrix}v\\ w\end{matrix}\right)\,,
\]
the difference $\phi(x,v)-\phi(y,w)$ is represented as
\[
\phi(y,w)-\phi(x,v)=\int_0^{\pi/2}(y_\theta\cdot\nabla_x\phi+w_\theta\cdot\nabla_v\phi)(x_\theta,v_\theta)d\theta\,.
\]
Substituting this expression in the identity above leads to
\[
\begin{aligned}
-\int_{(\mathbf R^3)^4}B[\rho_f,\rho_f](x,y,v-w) \mathbf 1_{(v-w|n_{x,y})<0}f(x,v)f(y,w)S_4dxdvdydwdn
\\
=\tfrac12\int_{(\mathbf R^3)^4}\int_0^{\pi/2}B[\rho_f,\rho_f](x,y,v-w)(y_\theta\cdot\nabla_x\phi+w_\theta\cdot\nabla_v\phi)(x_\theta,v_\theta)
\\
\times(v-w|n_{y,x})_+(v+w|n_{y,x})f(x,v)f(y,w)d\theta dxdvdydw
\\
=\tfrac12\int_{(\mathbf R^3)^4}\int_0^{\pi/2}B[\rho_f,\rho_f](x_{-\theta},y_{-\theta},v_{-\theta}\!-\!w_{-\theta})(y\cdot\nabla_x\phi+w\cdot\nabla_v\phi)(x,v)
\\
\times(v_{-\theta}\!-\!w_{-\theta}|n_{y_{-\theta},x_{-\theta}})_+(v_{-\theta}\!+\!w_{-\theta}|n_{y_{-\theta},x_{-\theta}})f(x_{-\theta},v_{-\theta})f(y_{-\theta},w_{-\theta})d\theta dxdvdydw&\,.
\end{aligned}
\]

Summarizing, we have proved that
\[
\begin{aligned}
\int_{(\mathbf R^3)^2}\mathrm{St}[f,f](x,v)\phi(x,v)|v|^2dxdv=-\int_{(\mathbf R^3)^2}\mathcal I_4[f](x,v)\cdot\nabla_x\phi(x,v)dxdv
\\
-\int_{(\mathbf R^3)^2}\mathcal J_4[f](x,v)\cdot\nabla_v\phi(x,v)dxdv&\,,
\end{aligned}
\]
where
\[
\begin{aligned}
\mathcal I_4[f](x,v):=-\tfrac12\int_{(\mathbf R^3)^2}\int_0^{\pi/2}B[\rho_f,\rho_f](x_{-\theta},y_{-\theta},v_{-\theta}\!-\!w_{-\theta})
\\
\times(v_{-\theta}\!-\!w_{-\theta}|n_{y_{-\theta},x_{-\theta}})_+(v_{-\theta}\!+\!w_{-\theta}|n_{y_{-\theta},x_{-\theta}})f(x_{-\theta},v_{-\theta})f(y_{-\theta},w_{-\theta})yd\theta dydw&\,.
\end{aligned}
\]
while
\[
\begin{aligned}
\mathcal J_4[f](x,v):=&-\int_{\mathbf R^3}\mathfrak J_0[\rho_f,\rho_f,(|v|^2-(v|n_{x,y})^2)f,f](x,y,v)dy
\\
&-\int_{\mathbf R^3}\mathfrak J_0[\rho_f,\rho_f,f,(w|n_{x,y})^2f](x,y,v)dy
\\
&-\tfrac12\int_{(\mathbf R^3)^2}\int_0^{\pi/2}B[\rho_f,\rho_f](x_{-\theta},y_{-\theta},v_{-\theta}\!-\!w_{-\theta})(v_{-\theta}\!-\!w_{-\theta}|n_{y_{-\theta},x_{-\theta}})_+
\\
&\times(v_{-\theta}+w_{-\theta}|n_{y_{-\theta},x_{-\theta}})f(x_{-\theta},v_{-\theta})f(y_{-\theta},w_{-\theta})wd\theta dydw\,.
\end{aligned}
\]
This is the weak formulation (in the sense of distributions) of \eqref{DivEnergyCurrent}. \hfill$\Box$


\section{Entropy Inequalities for Delocalized Collision Integrals}\label{sec:EntropIneq}


\subsection{Proof of Theorem \ref{T-HThmGlob}}\label{ssec:ProofHThmGlob}


Start from the weak formulation of the collision integral in Lemma \ref{L-WeakFormStff}, with test function $\phi:=\ln f+1$:
\[
\begin{aligned}
\int_{(\mathbf R^3)^2}\mathrm{St}[f,f](t,x,v)(\ln f(t,x,v)+1)dxdv
\\
=\tfrac12\int_{(\mathbf R^3)^4}f(t,x,v)f(t,y,w)\ln\frac{f(t,x,v'(v,w,n_{y,x})f(t,y,w'(v,w,n_{y,x}))}{f(t,x,v)f(t,y,w)}
\\
\times b(|x-y|)(v-w|n_{y,x})_+dxdvdydw&\,.
\end{aligned}
\]
Thus, in terms of the function $\ell$,
\[
\begin{aligned}
\int_{(\mathbf R^3)^2}\mathrm{St}[f,f](t,x,v)(\ln f(t,x,v)+1)dxdv
\\
=-\tfrac12\int_{(\mathbf R^3)^4}f(t,x,v)f(t,y,w)\ell\left(\frac{f(t,x,v'(v,w,n_{y,x}))f(t,y,w'(v,w,n_{y,x}))}{f(t,x,v)f(t,y,w)}\right)
\\
\times b(|x-y|)(v-w|n_{y,x})_+dxdvdydw
\\
+\tfrac12\int_{(\mathbf R^3)^4}(f(t,x,v'(v,w,n_{y,x}))f(t,y,w'(v,w,n_{y,x}))-f(t,x,v)f(t,y,w))
\\
\times b(|x-y|)(v-w|n_{y,x})_+dxdvdydw
\\
=S_1+S_2&\,.
\end{aligned}
\]
Obviously $S_1\le 0$, while 
\[
\begin{aligned}
\int_{(\mathbf R^3)^4}\!\!f(t,x,v'(v,w,n_{y,x}))f(t,y,w'(v,w,n_{y,x}))b(|x-y|)(v-w|n_{y,x})_+dxdvdydw
\\
=\!\int_{(\mathbf R^3)^4}\!\!f(t,x,v)f(t,y,v)b(|x\!-\!y|)(v'(v,w,n_{y,x})\!-\!w'(v,w,n_{y,x})|n_{y,x})_+dxdvdydw
\\
=\int_{(\mathbf R^3)^4}\!\!f(t,x,v)f(t,y,v)b(|x-y|)(v-w|n_{x,y})_+dxdvdydw&\,.
\end{aligned}
\]
Hence
\[
\begin{aligned}
S_2\!=\!\tfrac12\int_{(\mathbf R^3)^4}\!f(t,x,v)f(t,y,w)((v\!-\!w|n_{x,y})_+\!-\!(v\!-\!w|n_{x,y})_-)b(|x-y|)dxdvdydw
\\
=\tfrac12\int_{(\mathbf R^3)^4}f(t,x,v)f(t,y,w)(v-w|n_{x,y})b(|x-y|)dxdvdydw&\,.
\end{aligned}
\]
The formula
\[
\beta(r):=\int_r^\infty b(s)ds
\]
implies that
\begin{equation}\label{PotDer}
\begin{aligned}
(v\cdot\nabla_x+w\cdot\nabla_y)\beta(|x-y|)&=(v-w|\tfrac{x-y}{|x-y|})\beta'(|x-y|)
\\
&=-(v-w|n_{x,y})b(|x-y|)\,.
\end{aligned}
\end{equation}
Hence
\[
\begin{aligned}
\int_{(\mathbf R^3)^4}f(t,x,v)f(t,y,w)(v-w|n_{x,y})b(|x-y|)dxdvdydw
\\
=\int_{(\mathbf R^3)^4}(v\cdot\nabla_x+w\cdot\nabla_y)(f(t,x,v)f(t,y,w))\beta(|x-y|)dxdvdydw
\\
=\int_{(\mathbf R^3)^4}(\mathrm{St}[f,f](t,x,v)f(t,y,w)+f(t,x,v)\mathrm{St}[f,f](t,y,w))\beta(|x-y|)dxdvdydw
\\
-\int_{(\mathbf R^3)^4}(\partial_tf(t,x,v)f(t,y,w)+f(t,x,v)\partial_tf(t,y,w))\beta(|x-y|)dxdvdydw&\,.
\end{aligned}
\]
Observing that the local mass conservation implies that
\[
\int_{\mathbf R^3}\mathrm{St}[f,f](t,x,v)dv=\int_{\mathbf R^3}\mathrm{St}[f,f](t,y,w)dw=0\,,
\]
we conclude that
\[
S_2=-\tfrac12\frac{d}{dt}\int_{(\mathbf R^3)^4}f(t,x,v)f(t,y,w)\beta(|x-y|)dxdvdydw\,.
\]
Summarizing, we have proved that
\[
\frac{d}{dt}\int_{(\mathbf R^3)^2}f\ln f(t,x,v)dxdv=\int_{(\mathbf R^3)^2}\mathrm{St}[f,f](t,x,v)(\ln f(t,x,v)+1)dxdv=S_1+S_2
\]
and the conclusion follows from the expression of $S_2$ obtained above, after defining $\Lambda$ by the formula $\Lambda:=-S_1\ge 0$. \hfill$\Box$

\subsection{Proof of Theorem \ref{T-HThmLoc}}\label{ssec:ProofHThmLoc}


This proof involves somewhat lengthy computations, and will be split in six steps for the reader's convenience. Throughout this proof, $\psi$ designates a test function satisfying the following
conditions
\[
\psi\in C^1(\mathbf R^3\times\mathbf R^3)\cap L^\infty(\mathbf R^3\times\mathbf R^3)\,,\qquad\psi(x,v)\ge 0\text{ for all }x,v\in\mathbf R^3\,.
\]

\noindent
\underline{Step 1.} Start from the weak formulation of the collision integral in Lemma \ref{L-WeakFormStff}, with test function $\phi:=\psi\ln f$: 
observing that $v'(v,w,n)=v'(v,w,-n)$ and $w'(v,w,n)=w'(v,w,-n)$,
\[
\begin{aligned}
\int_{(\mathbf R^3)^2}\mathrm{St}[f,f](t,x,v)\ln f(t,x,v)\psi(x,v)dxdv
\\
=\int_{(\mathbf R^3)^4}\tfrac{(\psi\ln f)(t,x,v'(v,w,n_{y,x}))+(\psi\ln f)(t,y,w'(v,w,n_{y,x}))-(\psi\ln f)(t,x,v)-(\psi\ln f)(t,y,w)}2
\\
\times f(t,x,v)f(t,y,w)b(|x-y|)(v-w|n_{y,x})_+dxdvdydw&\,.
\end{aligned}
\]
Decompose
\[
S:=\tfrac{(\psi\ln f)(t,x,v'(v,w,n_{y,x}))+(\psi\ln f)(t,y,w'(v,w,n_{y,x}))-(\psi\ln f)(t,x,v)-(\psi\ln f)(t,y,w)}2
\]
as
\[
\begin{aligned}
S=\tfrac12\psi(x,v'(v,w,n_{y,x}))\ln\frac{f(t,x,v'(v,w,n_{y,x}))f(t,y,w'(v,w,n_{y,x}))}{f(t,x,v)f(t,y,w)}
\\
+\tfrac12(\psi(y,w'(v,w,n_{y,x}))-\psi(x,v'(v,w,n_{y,x}))\ln f(t,y,w'(v,w,n_{y,x}))
\\
-\tfrac12(\psi(x,v))-\psi(x,v'(v,w,n_{y,x}))\ln f(t,x,v)
\\
-\tfrac12(\psi(y,w)-\psi(x,v'(v,w,n_{y,x}))\ln f(t,y,w)
\\
=S_1+S_2-S_3-S_4&\,.
\end{aligned}
\]
The term $S_4$ is decomposed further into
\[
\begin{aligned}
S_4=&\tfrac12(\psi(y,w)-\psi(x,v'(v,w,n_{y,x}))\ln f(t,y,w)
\\
=&\tfrac12(\psi(y,w)-\psi(x,v))\ln f(t,y,w)
\\
&+\tfrac12(\psi(x,v)-\psi(x,v'(v,w,n_{y,x}))\ln f(t,y,w)
\\
=&S_{41}+S_{42}\,.
\end{aligned}
\]
\underline{Step 2.} Similarly, with $\ell(z):=z-1-\ln z\ge 0$ for $z>0$,
\[
\begin{aligned}
S_1=&-\tfrac12\psi(x,v'(v,w,n_{y,x}))\ell\left(\frac{f(t,x,v'(v,w,n_{y,x}))f(t,y,w'(v,w,n_{y,x}))}{f(t,x,v)f(t,y,w)}\right)
\\
&+\tfrac12\psi(x,v'(v,w,n_{y,x}))\left(\frac{f(t,x,v'(v,w,n_{y,x}))f(t,y,w'(v,w,n_{y,x}))}{f(t,x,v)f(t,y,w)}-1\right)
\\
=&-S_{11}+S_{12}\,.
\end{aligned}
\]
Thus
\[
\begin{aligned}
\int_{(\mathbf R^3)^2}\mathrm{St}[f,f]\ln f(t,x,v)dxdv=\int_{(\mathbf R^3)^4}(-S_{11}+S_{12}+S_2-S_3-S_{41}-S_{42})
\\
\times f(t,x,v)f(t,y,w)b(|x-y|)(v-w|n_{y,x})_+dxdvdydw&\,.
\end{aligned}
\]

Set
\[
\Lambda[\psi](t):=\int_{(\mathbf R^3)^4}S_{11}f(t,x,v)f(t,y,w)b(|x-y|)(v-w|n_{y,x})_+dxdvdydw\,;
\]
clearly
\[
\psi\ge 0\implies\Lambda[\psi]\ge 0\,.
\]
\underline{Step 3.} Next
\[
\begin{aligned}
\Gamma_{12}:=\int_{(\mathbf R^3)^4}S_{12}f(t,x,v)f(t,y,w)b(|x-y|)(v-w|n_{y,x})_+dxdvdydw
\\
=\tfrac12\int_{(\mathbf R^3)^4}\psi(x,v'(v,w,n_{y,x}))(f(t,x,v'(v,w,n_{y,x}))f(t,y,w'(v,w,n_{y,x}))
\\
-f(t,x,v)f(t,y,w))b(|x-y|)(v-w|n_{y,x})_+dxdvdydw
\\
=\tfrac12\int_{(\mathbf R^3)^4}(\psi(t,x,v)(v-w|n_{x,y})_+-\psi(x,v'(v,w,n_{y,x}))(v-w|n_{y,x})_+)
\\
\times f(t,x,v)f(t,y,w)b(|x-y|)dxdvdydw
\end{aligned}
\]
where the change of variables $(v,w)\mapsto(v'(v,w,n_{y,x}),w'(v,w,n_{y,x}))$ leads to the last equality, upon observing that
\[
(v-w|n_{y,x})=(v'(v,w,n_{y,x})-w'(v,w,n_{y,x})|n_{x,y})\,.
\]
Since
\[
\psi(t,x,v)(v-w|n_{x,y})_+=\psi(t,x,v)(v-w|n_{x,y})+\psi(t,x,v)(v-w|n_{y,x})_+\,,
\]
the term $\Gamma_{12}$ is recast as
\[
\begin{aligned}
\Gamma_{12}=\tfrac12\int_{(\mathbf R^3)^4}\psi(x,v)b(|x-y|)(v-w|n_{x,y})f(t,x,v)f(t,y,w)dxdvdydw
\\
+\tfrac12\int_{(\mathbf R^3)^4}(\psi(x,v)-\psi(x,v'(v,w,n_{y,x}))(v-w|n_{y,x})_+
\\
\times f(t,x,v)f(t,y,w)b(|x-y|)dxdvdydw
\\
=\Gamma_{121}+\Gamma_{122}&\,.
\end{aligned}
\]
\underline{Step 4.} The term $\Gamma_{121}$ is handled exactly as the potential term in the global entropy inequality: 
using the identity \eqref{PotDer} leads to
\[
\begin{aligned}
\Gamma_{121}=-\tfrac12\int_{(\mathbf R^3)^4}\psi(x,v)f(t,x,v)f(t,y,w)(v\cdot\nabla_x+w\cdot\nabla_y)\beta(|x-y|)dxdvdydw
\\
=\tfrac12\int_{(\mathbf R^3)^4}(v\cdot\nabla_x+w\cdot\nabla_y)[\psi(x,v)f(t,x,v)f(t,y,w)]\beta(|x-y|)dxdvdydw
\\
=\tfrac12\int_{(\mathbf R^3)^4}v\cdot\nabla_x\psi(x,v)f(t,x,v)f(t,y,w)\beta(|x-y|)dxdvdydw
\\
+\tfrac12\int_{(\mathbf R^3)^4}\psi(x,v)\mathrm{St}[f,f](t,x,v)f(t,y,w)\beta(|x-y|)dxdvdydw
\\
-\tfrac12\frac{d}{dt}\int_{(\mathbf R^3)^4}\psi(x,v)f(t,x,v)f(t,y,w)\beta(|x-y|)dxdvdydw&\,.
\end{aligned}
\]
In the last equality above, we have used the equality
\[
\begin{aligned}
(v\cdot\nabla_x+w\cdot\nabla_y)[\psi(x,v)f(t,x,v)f(t,y,w)]=f(t,x,v)f(t,y,w)v\cdot\nabla_x\psi(x,v)
\\
+\psi(x,v)f(t,y,w)(\mathrm{St}[f,f](t,x,v)-\partial_tf(t,x,v))
\\
+\psi(x,v)f(t,x,v)(\mathrm{St}[f,f](t,y,w)-\partial_tf(t,y,w))
\\
=f(t,x,v)f(t,y,w)v\cdot\nabla_x\psi(x,v)
\\
+\psi(x,v)(f(t,y,w)\mathrm{St}[f,f](t,x,v)+f(t,x,v)\mathrm{St}[f,f](t,y,w))
\\
-\psi(x,v)\partial_t(f(t,x,v)f(t,y,w))&\,,
\end{aligned}
\]
and the local conservation of mass (the first identity in Proposition \ref{P-GlobCons})
\[
\int_{\mathbf R^3}\mathrm{St}[f,f](t,y,w)dw=0
\]
to dispose of the term $\psi(x,v)f(t,x,v)\mathrm{St}[f,f](t,y,w)$ in the integral.

Representing $\mathrm{St}[f,f](t,x,v)$ in terms of the Landau mass current $\mathcal J_0[f]=\mathcal J_0[f,f]$ (see \eqref{DivMassCurrent}, and the definition 
of $\mathcal J_0[f,g]$ in the statement of Theorem \ref{T-HThmLoc}),
\[
\begin{aligned}
\Gamma_{121}=\tfrac12\int_{(\mathbf R^3)^2}v\cdot\nabla_x\psi(x,v)f(t,x,v)\rho_f(t,\cdot)\star\beta(|\cdot|)(x)dxdv
\\
-\tfrac12\int_{(\mathbf R^3)^2}\nabla_v\psi(x,v)\cdot\mathcal J_0[f](t,x,v)\rho_f(t,\cdot)\star\beta(|\cdot|)(x)dxdv
\\
-\tfrac12\frac{d}{dt}\int_{(\mathbf R^3)^2}\psi(x,v)f(t,x,v)\rho_f(t,\cdot)\star\beta(|\cdot|)(x)dxdv&\,.
\end{aligned}
\]

 \underline{Step 5.} On the other hand
\[
\begin{aligned}
\Gamma_{122}=&\tfrac12\int_{(\mathbf R^3)^4}(\psi(x,v)-\psi(x,v'(v,w,n_{y,x}))(v-w|n_{y,x})_+
\\
&\hskip 1.2cm\times f(t,x,v)f(t,y,w)b(|x-y|)dxdvdydw
\\
=&-\tfrac12\int_{(\mathbf R^3)^2}\psi(x,v)\mathrm{St}[f,f](t,x,v)dxdv
\\
=&\tfrac12\int_{(\mathbf R^3)^2}\nabla_v\psi(x,v)\cdot\mathcal J_0[f](t,x,v)dxdv\,.
\end{aligned}
\]
Next observe that
\[
S_3+S_{42}=\tfrac12(\psi(x,v)-\psi(x,v'(v,w,n_{y,x}))\ln(f(t,x,v)f(t,y,w))\,,
\]
so that
\[
\begin{aligned}
\Gamma_{34}:=\int_{(\mathbf R^3)^4}(S_3+S_{42})f(t,x,v)f(t,y,w)b(|x-y|)(v-w|n_{y,x})_+dxdvdydw
\\
=\tfrac12\int_{(\mathbf R^3)^4}(\psi(x,v)-\psi(x,v'(v,w,n_{y,x}))f(t,x,v)f(t,y,w)\ln(f(t,x,v)f(t,y,w))
\\
\times b(|x-y|)(v-w|n_{y,x})_+dxdvdydw
\\
=-\tfrac12\int_{(\mathbf R^3)^2}\nabla_v\psi(x,v)\cdot(\mathcal J_0[f\ln f,f](t,x,v)+\mathcal J_0[f,f\ln f](t,x,v))dxdv&\,.
\end{aligned}
\]
 \underline{Step 6.} While the terms $\Gamma_{12}$ and $\Gamma_{34}$ only contribute divergences in $v$ and time derivatives, the two terms considered below contribute 
 divergences in $x$.
 
 First consider
\[
\begin{aligned}
\Gamma_{41}:=\int_{(\mathbf R^3)^4}S_{41}f(t,x,v)f(t,y,w)b(|x-y|)(v-w|n_{y,x})_+dxdvdydw
\\
=\!\tfrac12\!\int_{(\mathbf R^3)^4}\!\!(\psi(y,w)\!-\!\psi(x,v))f(t,x,v)f\ln f(t,y,w)b(|x\!-\!y|)(v\!-\!w|n_{y,x})_+dxdvdydw&\,,
\end{aligned}
\]
and express 
\[
\psi(y,w)-\psi(x,v)=\int_0^{\pi/2}\frac{d}{d\theta}\psi(x_\theta,v_\theta)d\theta\,,
\]
where
\[
\left(\begin{matrix}x_\theta\\ y_\theta\end{matrix}\right):=\left(\begin{matrix}\cos\theta\,\,\,&\sin\theta\\ -\sin\theta&\cos\theta\end{matrix}\right)\left(\begin{matrix}x\\ y\end{matrix}\right)
\qquad\text{ and }\qquad
\left(\begin{matrix}v_\theta\\ w_\theta\end{matrix}\right):=\left(\begin{matrix}\cos\theta\,\,\,&\sin\theta\\ -\sin\theta&\cos\theta\end{matrix}\right)\left(\begin{matrix}v\\ w\end{matrix}\right)\,.
\]
Since
\[
\frac{d}{d\theta}\psi(x_\theta,v_\theta)=y_\theta\cdot\nabla_x\psi(x_\theta,v_\theta)+w_\theta\cdot\nabla_v\psi(x_\theta,v_\theta)\,,
\]
one has
\[
\begin{aligned}
\Gamma_{41}=\tfrac12\int_{(\mathbf R^3)^4}(y_\theta\cdot\nabla_x\psi(x_\theta,v_\theta)+w_\theta\cdot\nabla_v\psi(x_\theta,v_\theta))f(t,x,v)f\ln f(t,y,w)
\\
\times b(|x-y|)(v-w|n_{y,x})_+dxdvdydw
\\
=\tfrac12\int_{(\mathbf R^3)^4}(y\cdot\nabla_x\psi+w\cdot\nabla_v\psi)(x,v)\int_0^{\pi/2}f(t,x_{-\theta},v_{-\theta})f\ln f(t,y_{-\theta},w_{-\theta})
\\
\times b(|x_{-\theta}-y_{-\theta}|)(v_{-\theta}-w_{-\theta}|n_{y_{-\theta},x_{-\theta}})_+d\theta dxdvdydw
\end{aligned}
\]
because
\[
\det\frac{\partial(x_\theta,y_\theta)}{\partial(x,y)}=\det\frac{\partial(v_\theta,w_\theta)}{\partial(v,w)}=1\,.
\]
 \underline{Step 7.} Similarly, set
\[
\begin{aligned}
\Gamma_2:=\int_{(\mathbf R^3)^4}S_2f(t,x,v)f(t,y,w)b(|x-y|)(v-w|n_{y,x})_+dxdvdydw
\\
=\tfrac12\int_{(\mathbf R^3)^4}(\psi(y,w)-\psi(x,v))f(t,x,v'(v,w,n_{y,x}))\ln f(t,y,w)
\\
\times f(t,y,w'(v,w,n_{y,x}))b(|x-y|)(v-w|n_{x,y})_+dxdvdydw&\,.
\end{aligned}
\]
By the same change of variables as in $\Gamma_2$, one finds that
\[
\begin{aligned}
\Gamma_2=\tfrac12\int_{(\mathbf R^3)^4}(y\cdot\nabla_x\psi+w\cdot\nabla_v\psi)(x,v)\int_0^{\pi/2}f(t,x_{-\theta},v'(v,w,n_{y,x})_{-\theta})
\\
\times f(t,y_{-\theta},w'(v,w,n_{y,x})_{-\theta})\ln f(t,y_{-\theta},w_{-\theta})b(|x_{-\theta}-y_{-\theta}|)
\\
\times (v_{-\theta}-w_{-\theta}|n_{x_{-\theta},y_{-\theta}})_+d\theta dxdvdydw&\,.
\end{aligned}
\]
 \underline{Conclusion.} Gathering all these terms together, we arrive at the following identity:
 \[
 \begin{aligned}
 \int_{(\mathbf R^3)^2}\mathrm{St}[f,f](t,x,v)\ln f(t,x,v)\psi(x,v)dxdv=-\Lambda[\psi](t)
 \\
 +\int_{(\mathbf R^3)^2}(\mathcal K[f](t,x,v)\cdot\nabla_x\psi(x,v)+\mathcal L[f](t,x,v)\cdot\nabla_v\psi(x,v))dxdv
 \\
 -\tfrac12\frac{d}{dt}\int_{(\mathbf R^3)^2}\psi(x,v)f(t,x,v)\rho_f(t,\cdot)\star\beta(|\cdot|)(x)dxdv&\,,
 \end{aligned}
\]
where
\[
\begin{aligned}
\mathcal L[f](t,x,v):=-\tfrac12\mathcal J_0[f,f](t,x,v)(1-\rho_f\star\beta(|\cdot|)(t,x))
\\
-\tfrac12(\mathcal J_0[f\ln f,f](t,x,v)+\mathcal J_0[f,f\ln f](t,x,v))
\\
+\tfrac12\int_{(\mathbf R^3)^2}\int_0^{\pi/2}(f(t,x_{-\theta},v'(v,w,n_{y,x})_{-\theta})f(t,y_{-\theta},w'(v,w,n_{y,x})_{-\theta})
\\
-f(t,x_{-\theta},v_{-\theta})f(t,y_{-\theta},w_{-\theta}))\ln f(t,y_{-\theta},w_{-\theta})b(|x_{-\theta}-y_{-\theta}|)
\\
\times (v_{-\theta}-w_{-\theta}|n_{x_{-\theta},y_{-\theta}})_+wd\theta dydw&\,,
\end{aligned}
\]
where the notation $\rho_f(t,\cdot)\star\beta(|\cdot|)$ has been defined in \eqref{DefConvol}, while
\[
\begin{aligned}
\mathcal K[f](t,x,v)\!:=\!\tfrac12\int_{(\mathbf R^3)^2}\!\int_0^{\pi/2}\!(f(t,x_{-\theta},v'(v,w,n_{y,x})_{-\theta})f(t,y_{-\theta},w'(v,w,n_{y,x})_{-\theta})
\\
-f(t,x_{-\theta},v_{-\theta})f(t,y_{-\theta},w_{-\theta}))\ln f(t,y_{-\theta},w_{-\theta})b(|x_{-\theta}-y_{-\theta}|)
\\
\times (v_{-\theta}-w_{-\theta}|n_{x_{-\theta},y_{-\theta}})_+yd\theta dydw&\,.
\end{aligned}
\]


\section{Conclusion}


Summarizing, we have seen that delocalized collision integrals in the style of the Enskog, Povzner or soft sphere integrals can be expressed in terms of momentum and energy currents which have
nontrivial components in the space of positions. For that reason, such delocalized collision integrals contribute to the local conservation laws of momentum and energy which appear in the Euler
system of gas dynamics in the case of dense gases. This effect has been analyzed at the level of asymptotic analysis in \cite{Lachowicz} by means of the Hilbert expansion method, along with the
resulting virial expansion of the pressure. One can also find in chapter 16 of \cite{CC} (see in particular \S 16.32) a variant of the discussions in the present paper limited to an asymptotic expansion
in terms of the molecular radius. What is proposed in the present article is a non-asymptotic discussion which parallels that of \S 16.32 in \cite{CC}, along with a discussion of a local variant of the
Boltzmann H Theorem in Theorem \ref{T-HThmLoc}. All our discussions are based on Landau's formulation of the Boltzmann collision integral in terms of a mass current in the space of molecular
velocities, to be found in \S 41 of \cite{LL10}, and discussed in \cite{VillaniLandauCurr} in full detail.


\end{document}